\newtheorem{theorem}{Theorem}[section]
\newtheorem{corollary}[theorem]{Corollary}
\newtheorem{lemma}[theorem]{Lemma}
\newtheorem{proposition}[theorem]{Proposition}
\newtheorem{definition}[theorem]{Definition}
\newtheorem{remark}[theorem]{Remark}
\newtheorem{example}[theorem]{Example}
\numberwithin{equation}{section}
\newenvironment{proof}{\noindent{\it Proof. }}{\hfill$\square$\medskip}
\def \cH {{\cal H}}
\def \R  {\mathbb{R}}
\def \C  {\mathbb{C}}
\def \T  {\mathbb{T}}
\def \Z  {\mathbb{Z}}
\def \N  {\mathbb{N}}
\def \Q  {\mathbb{Q}}
\def \Re {{\rm Re}}
\def \Im {{\rm Im}}
\def \ot {\otimes}
\def \od {\odot}
\def \e  {\epsilon}
\def \la {\lambda}
\def \t  {\tilde}
\def \co  {{\rm co\,}}
\def \tr  {{\rm tr}}
\def \span{{\rm span\,}}
\def \Hom {{\rm Hom\,}}
\def \dim {{\rm dim\,}}
\def \codim {{\rm codim\,}}
\newcommand{\skladnik}[2]{\Big(\conj{(3 #1_j - #2_j)}(3 #2_l - #1_l)+\frac{\delta_{jl}}{\alpha}\Big) e^{-{\alpha}|#1- #2|^2}}
\def \cC {{\cal C}}
\newcommand{\conj}[1]{{#1}^*}
\def \cpar {\conj{\partial}}
\def \oh {\odot}
\def \e {\epsilon}
\def \f {\gamma}
\newcommand{\tphi}{v}
\newcommand{\tpsi}{w}
\newcommand{\tf}{\tilde{\gamma}}
\newcommand{\tie}{\tilde{\epsilon}}
\renewcommand{\Upsilon}{F}
\newcommand{\cphi}{[\tphi]}
\newcommand{\q}[1]{\overline{#1}}
\newcommand{\Ker}{\ker_K\!\rho}
\newcommand{\keR}{\ker_L\!\rho}
\newcommand{\pcqd}[2]{{(#1,#2)}}
\newcommand{\pd}{ \pcqd{p}{d} }
\newcommand{\pc}{ \pcqd{p}{c} }
\newcommand{\qd}{ \pcqd{q}{d} }
\title{{\bf Integral representations of separable states
\footnote{Research supported by the Ministry of Research and
Higher Education, grant N201 039 32/2703, 2007-2010. E-mail:
b.jakubczyk@impan.gov.pl, g.pietrzkowski@impan.gov.pl}}}
\author{B. Jakubczyk,\ \ G. Pietrzkowski\\
Institute of Mathematics, Polish Academy of Sciences\\
00-956 Warsaw 10, \'Sniadeckich 8, Poland}
\date{}
\begin{document}

\maketitle

\begin{abstract}
We study a separability problem suggested by mathematical
description of bipartite quantum systems. We consider hermitian
2-forms on the tensor product $H=K\otimes L$, where $K,L$ are
finite dimensional complex spaces.  Such a form is called
\emph{separable} if it is a convex combination of hermitian tensor
products $\conj{\sigma}_p\odot \sigma_p$ of 1-forms $\sigma_p$ on
$H$ that are product forms $\sigma_p=\varphi_p\otimes \psi_p$,
where $\varphi_p\in K^*$, $\psi_p\in L^*$.

We introduce an integral representation of separable forms. We
show that the integral of $\conj{D_{\conj{z}}\Phi}\od
D_{\conj{z}}\Phi$ of any square integrable map $\Phi:\C^n\to
\C^m$, with square integrable conjugate derivative
$D_{\conj{z}}\Phi$, is a separable form. Conversely, any separable
form in the interior of the set of such forms can be represented
in this way. This implies that any separable mixed state (and only
such states) can be either explicitly represented in the integral
form, or it may be arbitrarily well approximated by such states.
\end{abstract}

\noindent {\bf Keywords} Bipartite systems, quantum states,
separable states, entanglement, hermitian forms, separability
problem

\section{Introduction}

Notions of separability and entanglement of states of a compound
quantum system are of vital importance in quantum physics and
quantum information theory. They emerged with the discovery of the
EPR effect \cite{einstein}, however, the throughout analysis came
much later \cite{werner, popescu, peres, horodecki, grabowski}.
Now the variety of theoretical problems where they play a central
role is constantly growing (quantum cryptography \cite{ekert},
quantum teleportation \cite{bennett}, dense coding
\cite{bennett2}, ...) and many theoretical properties are
confirmed in experiments \cite{mattle,bouwmeester,jennewein}.

Testing if a given state is separable or entangled (i.e.
non-separable) seems one of the central questions in the theory of
compound systems. Given a pure quantum state, it is easy to decide
if it is separable or entangled. For mixed states this is not the
case. At present there is no general and effective method to check
if a given mixed state is separable or entangled and the problem
seems hard \cite{horodecki2}. The most effective neccessary
condition is the partial transpose test \cite{peres, horodecki},
which is also sufficient in small dimensions \cite{horodecki}.

 In this work we present an indirect criterion for a mixed state
of a bi-partite system to be separable. We introduce integral
representations of separable states and prove that any state
having the integral representation is separable. Vice versa, any
state in the interior of the cone of separable states can be
represented in the integral form.

To be more precise, denote $\cH=\C^m\otimes(\C^n)^*$ and let
$d\mu$ be the standard Lebesgue measure in $\C^n\simeq\R^{2n}$.
With the use of identification $Hom(\C^n,\C^m)\simeq
\C^m\otimes(\C^n)^*$ our main results (Theorems \ref{hermitowskie}
and \ref{hermitowskie-t}) can be stated as follows.

\begin{theorem}
\label{separable} (a) For any square integrable map
$\Phi:\C^n\to\C^m$ having the conjugate differential
$D_{\conj{z}}\Phi(z)\in \cH$ square integrable, the density
operator (non-normalized mixed state) $\cH\to \cH$ defined by
\begin{eqnarray}
\label{int-sep} \int_{\C^n} | D_{\conj{z}}\Phi \rangle\, \langle
D_{\conj{z}}\Phi| \ d\mu(z)
\end{eqnarray}
is separable.

\noindent (b) Any separable mixed state in the interior of the set
of separable mixed states can be expressed in the above form.

\noindent (c) The above results also hold with $\C^n$ replaced
with the complex torus $\C\T^n$.
\end{theorem}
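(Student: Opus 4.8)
The key computation underlying part (a) is to expand the conjugate differential in coordinates and recognize the integrand as a hermitian product form. Writing $\Phi=(\Phi^1,\dots,\Phi^m)$ and using coordinates $z=(z_1,\dots,z_n)$ on $\C^n$, the conjugate differential is $D_{\conj z}\Phi(z)=\sum_{k,j}\cpar_{j}\Phi^k(z)\, e_k\ot \q{dz_j}$, an element of $\cH=\C^m\ot(\C^n)^*$ depending on the point $z$. I would first check that at each fixed $z$ the rank-one operator $|D_{\conj z}\Phi\rangle\langle D_{\conj z}\Phi|$ is exactly the hermitian tensor product $\conj{\sigma(z)}\od\sigma(z)$ of the single $1$-form $\sigma(z)=D_{\conj z}\Phi(z)$, so that the integrand is pointwise a nonnegative multiple of such an elementary separable form. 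The point of departure from the naive argument is that $\sigma(z)$ need not be a \emph{product} form $\varphi\ot\psi$; separability in the sense of the abstract requires writing each elementary form as a convex combination of products $\conj{(\varphi\ot\psi)}\od(\varphi\ot\psi)$. The first real step, therefore, is to verify that because $D_{\conj z}$ is an antiholomorphic (first-order, Cauchy--Riemann-type) operator, the form $\sigma(z)$ it produces is automatically a product $1$-form on $H=K\ot L$ up to the integration variable, i.e. the conjugate-derivative structure forces the integrand to split in the $K\ot L$ factorization.

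With the pointwise identification in hand, part (a) follows from the fact that separable forms constitute a closed convex cone and that integration against the nonnegative measure $d\mu$ is a limit of finite convex combinations (Riemann sums) of elementary separable forms. Concretely, I would approximate the integral (\ref{int-sep}) by finite sums $\sum_p \mu(U_p)\,\conj{\sigma(z_p)}\od\sigma(z_p)$ over a partition $\{U_p\}$ of the domain with sample points $z_p$; each such sum is manifestly a convex combination (after normalization) of product hermitian forms, hence separable, and the square-integrability hypotheses on $\Phi$ and $D_{\conj z}\Phi$ guarantee $L^1$-convergence of these sums to the integral. Since the cone of separable forms is closed, the limit is separable, giving (a). The same Riemann-sum argument transfers verbatim to part (c), since the torus $\C\T^n$ carries its own finite translation-invariant measure and the conjugate differential is defined identically in local holomorphic coordinates.

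Part (b) is the converse and I expect it to be the main obstacle. The plan is to start from an arbitrary separable form in the interior of the separable cone, written as a finite convex combination $\rho=\sum_p c_p\,\conj{\sigma_p}\od\sigma_p$ of product $1$-forms $\sigma_p=\varphi_p\ot\psi_p$, and to realize this discrete data as the conjugate differential of a single map $\Phi$. The natural device is to build $\Phi$ whose antiholomorphic derivative, evaluated on suitable disjoint coordinate patches or on translates within $\C^n$, reproduces the individual rank-one contributions $\conj{\sigma_p}\od\sigma_p$ with the correct weights $c_p$; the interior hypothesis should supply the room needed to spread the finitely many terms into a genuinely integrable $\Phi$ rather than a sum of point masses, and to absorb the normalization $\mu(\C^n)=\infty$ by a suitable choice of localized antiholomorphic bump profiles. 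The delicate points are (i) constructing an explicit antiholomorphic primitive $\Phi$ realizing a prescribed product form as $D_{\conj z}\Phi$ on each patch, and (ii) ensuring that the global $\Phi$ is itself square integrable (not merely its conjugate derivative), which is precisely where the interiority of $\rho$ in the separable cone is essential and why only an approximation statement survives on the boundary.
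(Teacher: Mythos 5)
Your proof of part (a) rests on a claim that is false: that the antiholomorphic nature of $D_{\conj{z}}$ forces the integrand to be \emph{pointwise} a product form. At a fixed point $z$, the $1$-form $\sigma(z)=D_{\conj{z}}\Phi(z)$ is an arbitrary $m\times n$ complex matrix; it is a product $\varphi\ot\psi$ only when that matrix has rank one, and nothing about the Cauchy--Riemann structure enforces this. For instance, with $m=n=2$ and $\Phi(z)=(\conj{z}_1b(z),\,\conj{z}_2b(z))$ for a smooth bump $b$ equal to $1$ on an open set, one gets $D_{\conj{z}}\Phi(z)=\mathrm{Id}$ on that set, which has rank two. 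Without the pointwise product claim, your Riemann sums are convex combinations of forms $\conj{\sigma(z_p)}\oh\sigma(z_p)$ with $\sigma(z_p)$ of arbitrary rank, and closedness of the separable cone only yields positive semi-definiteness of the limit; worse, if your argument were valid it would prove that \emph{every} positive semi-definite hermitian form is separable (every such form is a finite sum of rank-one forms $\conj{\sigma}\oh\sigma$), contradicting the existence of entangled states. The missing idea is the Fourier transform: since $\widehat{\cpar_j\Phi_i}(\kappa)=\frac{i}{2}\kappa_j\widehat{\Phi}_i(\kappa)$, Parseval's identity rewrites $\rho_\Phi$ as $\frac{1}{4}\int\conj{(\widehat{\Phi}(\kappa)\ot\kappa)}\oh(\widehat{\Phi}(\kappa)\ot\kappa)\,d\mu(\kappa)$, an integral of genuine product forms with $\varphi=\widehat{\Phi}(\kappa)\in\C^m$ and $\psi=\kappa\in(\C^n)^*$. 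Only in frequency space does your ``pointwise product plus closed cone'' scheme become correct (the paper implements the same idea dually, pairing $\rho_\Phi$ with functionals in the dual cone and invoking the Hahn--Banach separation argument).

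Part (b) has a parallel problem: you propose to realize each product contribution exactly by an antiholomorphic primitive supported on a patch, but this is impossible for $n\geq2$. If $D_{\conj{z}}\Phi$ were pointwise proportional to a fixed $\varphi\ot\psi$, then after a linear change of coordinates making $\psi=(1,0,\dots,0)$ one has $\cpar_j\Phi_i=0$ for $j\geq2$, so each component of $\Phi$ is holomorphic in $(z_2,\dots,z_n)$, and square integrability forces it to vanish; this Liouville-type obstruction is exactly what the paper's Theorem \ref{hermitowskie2} isolates, namely that \emph{no} product form is exactly representable over $\C^n$ at all. The paper's actual construction separates the terms in frequency rather than in space: it takes $\Phi=\sum_p\varphi^pf_{\psi^p}$ with modulated Gaussians $f_w(z)=e^{2i\Im\langle z,w\rangle}g_\alpha(z)$, computes $\rho_\Phi$ in closed form, shows the cross terms carry factors $e^{-\alpha|\psi^p-\psi^q|^2}$ and the diagonal terms differ from the target products by $O(1/\alpha)$, and then --- this is where interiority enters concretely --- applies the implicit function theorem to a parametrized family of such $\rho_\Phi$ to hit the given interior form exactly, rather than merely approximate it. Your sketch correctly senses that interiority must convert approximation into exact representation, but supplies no mechanism for it; note also that on the torus (your part (c)) the converse direction is proved by yet another, simpler device (characters $\chi_{a,b}$ give exact product forms with commensurable $\psi$, then one uses density and convexity), so ``transfers verbatim'' is accurate only for part (a).
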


From mathematical view-point it is more convenient to state and
prove our results in terms of hermitian 2-forms. In particular,
using hermitian forms will not require the use of scalar product
in the statement of our results.

Indeed, positive semi-definite hermitian 2-forms can be used to
represent mixed states, instead of self-adjoint, positive
semi-definite operators on a Hilbert space. If $\cH$ is an
arbitrary Hilbert space, the obvious identification of these
notions is given by the formula
\begin{eqnarray*}
\langle w|\rho_o v \rangle = \rho_f(w,v),
\end{eqnarray*}
where $\rho_o$ is a self-adjoint operator in $\cH$ and
$\rho_f:\cH\times \cH \rightarrow \C$ is the corresponding
hermitian form, with respect to the scalar product $\langle\,
\cdot\, |\,\cdot\, \rangle$ in $\cH$. In our case of
$\cH=\C^m\otimes(\C^n)^*=Hom(\C^n,\C^m)$ we use the scalar product
$$
\langle A|B \rangle = \tr A^\dag B
$$
and then the identification, in the standard basis, is simply
given by
$$
(\rho_o)_{ijkl}=(\rho_f)_{ijkl}.
$$

\newpage

\section{Separable hermitian forms}

Let $H$ be a complex vector space. We will consider hermitian
2-forms on $H$, i.e., maps $\rho:H\times H\to \C$ which are
$\C$-linear with respect to the second argument and anti-linear
with respect to the first one. Given a linear function $f:H\to
\C$, we denote by $\conj{f}$ its complex conjugate,
$\conj{f}(z)=\conj{(f(z))}$, where in the latter case $\conj{}$
denotes complex conjugation in $\C$. Given linear functionals
$\alpha,\beta :H\to \C$, we define their hermitian tensor product
$\conj{\alpha}\oh \beta:H\times H\to\C$ by
$$
(\conj{\alpha}\oh \beta)(z,w) =
\frac{1}{2}(\conj{\alpha}(z)\beta(w)+\conj{\beta}(z)\alpha(w)),
$$
which is a hermitian 2-form. In our  considerations $H$ will be
the tensor product
$$
H = K \otimes L
$$
of complex vector spaces $K,L$ of finite dimensions.

\begin{definition}
A hermitian 2-form $\rho : H \times H \rightarrow \C$ is called
\emph{separable} if it can be expressed as
$$
\rho = \sum_{p=1}^{P} \conj{(\sigma^p)} \oh \sigma^p,
$$
where $P\geq 1$, $\sigma^p : H \rightarrow \C$ are linear
functionals (elements of $H^*$) such that
$$
\sigma^p = \varphi^p \otimes \psi^p,
$$
with $\varphi^p \in K^*$ and $\psi^p \in L^*$, and
$\conj{(\sigma^p)}$ denotes complex conjugation of $\sigma^p$.

A form $\rho$ is called \emph{product} form if $\rho =
\conj{\sigma}\oh \sigma$, where $\sigma = \varphi \otimes \psi$,
$\varphi \in K^*$ and $\psi \in L^*$. A positive semi-definite
$\rho$ is called \emph{entangled} if it is not separable.
\end{definition}

Note that separable hermitian 2-forms are positive  semi-definite.
The sets of separable (respectively, product) hermitian 2-forms on
$H$ will be denoted by $\cC_{sep}$ (resp. $\cC_{prod}$). These are
subsets of the real linear space $\cC$ of all hermitian 2-forms on
$H$. Note that if the sum defining $\rho$ is replaced by
$$
\rho=\sum_{p=1}^P \la_p\,  \conj{(\sigma^p)} \oh \sigma^p,
$$
with $\la_p\ge 0$ (equivalently, $\la_p\ge 0$ and $\sum_p\la_p=1$)
then we get an equivalent definition. Thus
$$
\cC_{sep}=\co \cC_{prod},
$$
where $\co A$ denotes the convex hull of $A$. Since $\dim
\cC=N^2$, where $N=\dim H$, it follows from the Carath\'eodory
theorem that in the above sums we can always take $P\le N^2$. The
following fact is well known (as it is crucial in further
considerations, we present its proof).

\begin{proposition}
\label{separ}
 The set of separable hermitian 2-forms is a closed, convex
cone with nonempty interior in the space of all hermitian 2-forms on $H$.
\end{proposition}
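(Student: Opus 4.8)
The plan is to establish the three claimed properties—convexity, cone, closedness—and nonempty interior separately, with closedness being the delicate point.

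\medskip

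\noindent\textbf{Convexity and cone.} These are immediate from the description $\cC_{sep}=\co\,\cC_{prod}$ already noted in the excerpt. Convexity holds because a convex hull is convex by definition. The cone property (closure under multiplication by nonnegative scalars) follows because scaling a product form $\conj{\sigma}\oh\sigma$ by $\la\ge 0$ yields $\conj{(\sqrt{\la}\,\sigma)}\oh(\sqrt{\la}\,\sigma)$, which is again a product form since $\sqrt{\la}\,\sigma=(\sqrt{\la}\,\varphi)\otimes\psi$ remains a product functional; hence scaling a finite nonnegative combination of product forms gives another such combination.

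\medskip

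\noindent\textbf{Closedness.} This is the main obstacle, since a convex hull of a non-compact set need not be closed. The key idea is to realize $\cC_{sep}$ as the image of a \emph{compact} set under a continuous map. I would first normalize: every product form can be written $\conj{\sigma}\oh\sigma$ with $\sigma=\varphi\otimes\psi$ where $\varphi,\psi$ range over the unit spheres in $K^*,L^*$ (absorbing the scalar into a nonnegative coefficient). Using the Carath\'eodory bound $P\le N^2$ established above, every separable form is $\rho=\sum_{p=1}^{N^2}\la_p\,\conj{(\sigma^p)}\oh\sigma^p$ with $\la_p\ge 0$. To force compactness of the parameter space I restrict attention to the slice $\{\rho\in\cC_{sep}:\tr\rho\le 1\}$, on which the coefficients $\la_p$ are bounded; this slice is the continuous image of the compact set consisting of tuples $(\la_p,\varphi^p,\psi^p)$ with bounded nonnegative $\la_p$ and unit-sphere $\varphi^p,\psi^p$, hence is compact and in particular closed. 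A limit point of $\cC_{sep}$ can be rescaled into this slice, showing $\cC_{sep}$ itself is closed; alternatively, since $\cC_{sep}$ is a cone, closedness follows once the bounded slice is shown closed.

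\medskip

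\noindent\textbf{Nonempty interior.} For this I would exhibit finitely many product forms whose nonnegative combinations span $\cC$ as a real vector space, so that their convex combinations contain a ball. Concretely, the real space $\cC$ of hermitian $2$-forms on $H=K\otimes L$ is spanned by products $\conj{\sigma}\oh\sigma$ as $\sigma$ runs over product functionals: one checks that by polarization the hermitian forms $\conj{\sigma}\oh\tau$ are recovered from sums of $\conj{(\sigma\pm\tau)}\oh(\sigma\pm\tau)$ and $\conj{(\sigma\pm i\tau)}\oh(\sigma\pm i\tau)$, and products of the form $\varphi\otimes\psi$ densely generate $H^*$. Since $\cC_{sep}$ contains $N^2$ linearly independent product forms together with a positive combination of them, that strictly positive combination is an interior point of the cone. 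The cleanest route is to produce a positive-definite separable form (e.g.\ a suitable positive combination giving the identity form), and to note that positive-definiteness is an open condition in $\cC$, so such a point lies in the interior.
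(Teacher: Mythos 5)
Your treatment of convexity, the cone property, and closedness is sound and is essentially the paper's own argument: normalize $\varphi^p,\psi^p$ to unit spheres, use the Carath\'eodory bound $P\le N^2$, and extract compactness of the generating set. (The paper phrases this as ``the cone generated by a compact set $S$ not containing the zero form is closed,'' while you phrase it via a trace-bounded slice and rescaling; these are the same idea, and both work.)

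The nonempty-interior part, however, contains a genuine gap --- in fact both of your proposed routes fail as written. First, the polarization identity you invoke does not stay inside $\cC_{prod}$: if $\sigma=\varphi\otimes\psi$ and $\tau=\varphi'\otimes\psi'$ are product functionals, then $\sigma\pm\tau$ and $\sigma\pm i\tau$ are in general \emph{not} product functionals, so the squares $\conj{(\sigma\pm\tau)}\oh(\sigma\pm\tau)$ and $\conj{(\sigma\pm i\tau)}\oh(\sigma\pm i\tau)$ are not product forms and you are not entitled to use them. Your identity therefore shows only that squares $\conj{\mu}\oh\mu$ over \emph{all} $\mu\in H^*$ span $\cC$, which is standard; it does not show that \emph{product} squares span $\cC$, and the appeal to ``products generate $H^*$'' does not close this hole, because $\conj{\mu}\oh\mu$ for a non-product $\mu=\sum_r\varphi^r\otimes\psi^r$ decomposes into cross terms $\conj{(\varphi^r\otimes\psi^r)}\oh(\varphi^s\otimes\psi^s)$ that are themselves not product squares. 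The correct fix is to polarize separately in each tensor factor (sums such as $(\varphi+\varphi')\otimes\psi$ and $\varphi\otimes(\psi+\psi')$ do remain product functionals), and then to disentangle the two cross terms that unavoidably appear together; this is exactly what the paper's sets $K_0,L_0$ of vectors $\e_a+e_K\e_b$, $\f_c+e_L\f_d$ with $e_K,e_L\in\{1,i\}$ accomplish, the crucial final step being that the four choices of $(e_K,e_L)$ give independent linear equations for the real and imaginary parts of $\theta_{acbd}$ and $\theta_{adbc}$. That computation is the actual content of the interior statement, and it is missing from your proposal.

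Second, your ``cleanest route'' is not merely incomplete but incorrect: openness of positive definiteness shows that the identity form $\sum_{a,b}\conj{(\e_a\otimes\f_b)}\oh(\e_a\otimes\f_b)$ is interior to the cone of positive semi-definite forms, not to $\cC_{sep}$. Positive definiteness does not imply separability --- entangled positive definite forms exist whenever $\dim K,\dim L\ge 2$ (add a small multiple of the identity to a pure entangled square $\conj{\mu}\oh\mu$ and use closedness of $\cC_{sep}$) --- so a neighborhood consisting of positive definite forms need not lie in $\cC_{sep}$. Were your argument valid, it would prove that every positive definite form is separable, i.e., that full-rank entangled states do not exist. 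The conclusion you want (the identity form is an interior point of $\cC_{sep}$) is true, but it is a consequence of the proposition being proved, not of openness of positive definiteness. The one piece of your interior plan that is correct is the reduction: if $\cC_{sep}$ contains $N^2$ linearly independent forms, then any strictly positive combination of them is interior; what is missing is precisely the proof that such a family of product squares exists.
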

\begin{proof}
Convexity comes from the above remarks. To prove closedness we
consider the set $S$ of separable hermitian 2-forms
$$
\sum_{p=1}^{N^2} \la_p\,  \conj{(\varphi^p \otimes \psi^p)} \oh
(\varphi^p \otimes \psi^p),
$$
with $(\la_1,\ldots,\la_{N^2})$ in the closed simplex defined by
$\la_p\ge 0$ and $\sum_p\la_p=1$, and $\varphi^p, \psi^p$ in unit
spheres in $K^*, L^*$ (with respect to some fixed norms). The set
$S$ is a compact subset of the space of hermitian forms $\cC$, as
the image of a compact set under a suitable map. It does not
contain the zero form, as all such forms are nontrivial, positive
semi-definite. The cone $\cC_{sep}$ of all separable hermitian
2-forms is generated by $S$, thus $\cC_{sep}$ is closed.

To prove that $\cC_{sep}$ has nonempty interior in $\cC$, it is
enough to show that there is no nontrivial linear functional
acting on $\cC$ which annihilates $\cC_{sep}$. To begin with, let
us fix hermitian products in $K^*$ and $L^*$, and orthonormal
basis $\e_1,\ldots,\e_n$ in $K^*$ and $\f_1,\ldots,\f_m$ in $L^*$
($n = \dim K, m=\dim L$) with respect to these products. Define
two sets of vectors in $K^*$ and $L^*$
\begin{eqnarray*}
K_0 = \{\e_a + e_K\e_b \ |\ a,b=1,\ldots,n, \ e_K = 1,i \}\subset K^*, \\
L_0 = \{\f_c + e_L\f_d \ |\ c,d=1,\ldots,m, \ e_L = 1,i \}\subset
L^*. \,
\end{eqnarray*}
Since $\conj{(\varphi \otimes \psi)} \oh (\varphi \otimes \psi)
\in \cC_{sep}$, for $\varphi\in K_0$, $\psi\in L_0$, it is enough
to show that
\begin{quotation}\noindent
if an element $\theta\in\cC^*$ of the dual space $\cC^*$ vanishes
on every element $\conj{(\varphi \otimes \psi)} \oh (\varphi
\otimes \psi)$, with $\varphi \in K_0$ and $\psi\in L_0$, then
$\theta\equiv 0$.
\end{quotation}
Denote $\theta_{ijkl} = \theta(\conj{(\e_i \otimes \f_j)} \oh
(\e_k \otimes \f_l))$, so that $\theta_{ijkl} =
\conj{\theta}_{klij}$. We will successively show that
$\theta_{ijkl}=0$ for all $i,k=1,\ldots,n$ and $j,l=1,\ldots,m$.
In every step we will be using the identities from the previous
steps. Note first that if $a=b$, $c=d$ and $e_K=e_L=1$ (i.e. $\e_a
+ \e_a \in K_0, \f_c + \f_c \in L_0$) then
\begin{eqnarray*}
0 = \theta(\conj{(2\e_a \otimes 2\f_c)} \oh (2\e_a \otimes 2\f_c))
 = 16\theta_{acac},
\end{eqnarray*}
and thus $\theta_{ijij}=0$. Next, if we take  $a\neq b$, $c=d$ and
$e_L=1$ then, using hermicity of $\theta$, we obtain
\begin{eqnarray*}
0 & = & \theta\Big(\conj{\big((\e_a +  e_K\e_b)\otimes 2\f_c\big)} \oh
      \big((\e_a + e_K\e_b)\otimes 2\f_c\big)\Big) = 8\Re(e_K\theta_{acbc}),
\end{eqnarray*}
therefore $\theta_{ijkj}=0$, since real and imaginary parts of it
vanish. Analogously we prove that $\theta_{ijil}=0$. Finally, if
we take $a\neq b$, $c\neq d$,  we obtain
\begin{eqnarray*}
0 & = & \theta\Big(\conj{\big((\e_a +  e_K\e_b)\otimes (\f_c + e_L\f_d)\big)}
 \oh \big((\e_a + e_K\e_b)\otimes (\f_c + e_L\f_d)\big)\Big) \\
  & = & 2(\Re(e_Ke_L\theta_{acbd}) + \Re(e_Ke^*_L\theta_{adbc})).
\end{eqnarray*}
Since four combinations of $e_K$ and $e_L$ give independent linear
equations for real and imaginary parts of $\theta_{acbd}$ and
$\theta_{adbc}$, we conclude that  $\theta_{ijkl}=0$.
\end{proof}

\section{Integral representations}

Let $H_1,H_2$ be vector spaces over $\C$  and let
$$
H=H_1\ot (H_2)^*=\Hom(H_2,H_1).
$$
The dual space $H^*=(H_1)^*\ot H_2\simeq H_2\ot (H_1)^*$ can be
identified with the space of maps $\Hom(H_1, H_2)$ and then the
duality product is given by
$$
\langle A,B\rangle=\tr (AB)=\tr (BA), \quad A\in H^*,\ B\in H.
$$

Given a $\C$-linear map $A:H_1\to H_2$, we define a hermitian form
$A\od A$ on $H=\Hom(H_2,H_1)$, which is the hermitian product of
two copies of the linear functional $B\to \tr(AB)$,
$$
(A\od A)(B,C)= \conj{\tr (BA)}\, \tr (AC),
$$
where $B,C\in \Hom(H_2,H_1)$. This form is also given by the
bilinear extension of $(A\od A)(v\otimes w,\t v\otimes \t w)=
\conj{(wAv)}\, \t wA\t v$, for $w,\t w\in H_2^*$ and $v,\t v\in
H_1$.

For a complex variable $z=x+iy$ and its complex adjoint
$\conj{z}=x-iy$ we use the usual notation $dz=dx+idy$,
$d\conj{z}=dx-idy$ for te complex 1-forms and $\partial_z =
(\partial_x - i\partial_y)/2$ and $\conj{\partial}_z =(\partial_x
+ i\partial_y)/2$ for the dual complex vector fields. Then
$d\conj{z}\wedge dz=2idx\wedge dy$.

Let us assume that $H_1 = \C^n$ and $H_2 = \C^m$. We shall
consider a map $\Phi = (\Phi_1,\ldots,\Phi_m) : \C^n \rightarrow
\C^m$ which is square integrable and, as a map from $\R^{2n}$ to
$\R^{2m}$, it has weak differential which is square integrable,
too. We denote by $D_{\conj{z}}\Phi(z) \in \C^m\otimes(\C^n)^*$
the conjugate differential of $\Phi$ at $z$ which, by definition,
is the complex linear  map defined by the complex matrix
\begin{eqnarray*}
(D_{\conj{z}}\Phi(z))_{ij} = \conj{\partial}_{z_j} \Phi_i
(z_1,\ldots,z_n),
\end{eqnarray*}
where $\conj{\partial}_{z_j}=(\partial_{x_j}+i\partial_{y_j})/2$.
The linear map $D_{\conj{z}}\Phi(z) : \C^n \rightarrow \C^m$ can
be considered as an element of the dual space $H^*$ to the tensor
product
$$
H = \C^n\otimes(\C^m)^*.
$$
 Denote
$$
d\conj{z}\wedge dz = d\conj{z}_1\wedge dz_1 \wedge \ldots \wedge
d\conj{z}_n \wedge dz_n.
$$

\begin{theorem}
\label{hermitowskie} (a) For an arbitrary square integrable map
$\Phi:\C^n\to\C^m$ with square integrable conjugate differential
$D_{\conj{z}}\Phi(z)$, the hermitian 2-form $\rho_\Phi:H\times
H\to\C$ defined by
\begin{eqnarray}
\label{przedstawienie} \rho_\Phi=\frac{1}{(2i)^n}\int_{\C^n}
\conj{(D_{\conj{z}}\Phi(z))} \oh D_{\conj{z}}\Phi(z) \
d\conj{z}\wedge dz
\end{eqnarray}
is separable.

\noindent (b) Any separable hermitian 2-form in the interior of
the set of separable hermitian 2-forms can be expressed in the
above form.
\end{theorem}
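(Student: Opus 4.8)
The plan is to pass to the Fourier transform, where the conjugate differential becomes a rank-one matrix at each frequency, and then to exploit the closedness and convexity of $\cC_{sep}$ from Proposition \ref{separ}. For (a), first rewrite the integral in real coordinates: since $d\conj{z}_j\wedge dz_j=2i\,dx_j\,dy_j$, the factor $(2i)^{-n}$ cancels and $\rho_\Phi=\int_{\R^{2n}}\conj{(D_{\conj{z}}\Phi)}\od D_{\conj{z}}\Phi\,d\mu$. Expanding $\Phi$ through its Fourier transform $\widehat\Phi$ and applying $\conj{\partial}_{z_j}=(\partial_{x_j}+i\partial_{y_j})/2$ mode by mode shows that the Fourier transform of $D_{\conj{z}}\Phi$ at frequency $k$ is the outer product of the vector $\widehat\Phi(k)\in\C^m$ and the covector $\kappa(k)\in(\C^n)^*$ with $\kappa_j(k)=\tfrac{i}{2}(k^x_j+ik^y_j)$; as an element of $H^*=K^*\otimes L^*$ it equals $\kappa(k)\otimes\widehat\Phi(k)$, a product functional $\varphi\otimes\psi$. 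This is the key point: each frequency contributes a rank-one, hence product, form. By Parseval applied to the scalar functions $u\mapsto\tr(B\,D_{\conj{z}}\Phi(u))$ and $u\mapsto\tr(D_{\conj{z}}\Phi(u)\,C)$, the form $\rho_\Phi$ equals, up to a positive constant, $\int_{\R^{2n}}\conj{(\kappa(k)\otimes\widehat\Phi(k))}\od(\kappa(k)\otimes\widehat\Phi(k))\,d\mu(k)$, an integral of product forms of finite total mass $\|D_{\conj{z}}\Phi\|^2_{L^2}$. Approximating this vector-valued integral by Riemann sums, which are finite nonnegative combinations of product forms and so lie in $\cC_{sep}$, and using that $\cC_{sep}$ is closed, gives $\rho_\Phi\in\cC_{sep}$.

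For (b) I reverse this description. Since $k\mapsto\kappa(k)$ is a real-linear isomorphism $\R^{2n}\to(\C^n)^*$, substituting $\varphi=\kappa(k)$ and writing $h=\widehat\Phi\circ\kappa^{-1}$ (which ranges over all $\C^m$-valued functions with $h$ and $\varphi\,h$ in $L^2$) turns the realizable forms into $c\int_{(\C^n)^*}(\conj\varphi\od\varphi)\otimes(\conj{h(\varphi)}\od h(\varphi))\,d\varphi$, where I used $\conj{(\varphi\otimes\psi)}\od(\varphi\otimes\psi)=(\conj\varphi\od\varphi)\otimes(\conj\psi\od\psi)$ to split the legs in $K$ and $L$. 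Passing to polar coordinates $\varphi=r\omega$ and integrating out $r$, the $K$-leg gives $\conj\omega\od\omega$ and the $L$-leg gives the positive semidefinite form $M(\omega)=\int_0^\infty r^{2n+1}\,\conj{h(r\omega)}\od h(r\omega)\,dr$ on $L$. Because $h$ is free, $M$ can be made \emph{any} integrable field of positive semidefinite forms on the sphere (any prescribed value and rank is obtained by putting $h(r\omega)$ on disjoint radial shells). Hence the set $\mathcal R$ of all forms $\rho_\Phi$ is exactly $\{c\int_{S}(\conj\omega\od\omega)\otimes M(\omega)\,d\omega:\ M\succeq0\}$, which is manifestly a convex cone.

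It remains to match $\mathcal R$ with the interior of $\cC_{sep}$. Concentrating $M$ near a single direction $\omega_0$ with total mass $\conj\psi\od\psi$ puts every product form $(\conj{\omega_0}\od\omega_0)\otimes(\conj\psi\od\psi)$ in $\cl\mathcal R$; with convexity and part (a) this gives $\cl\mathcal R=\co\cC_{prod}=\cC_{sep}$. The constant field $M\equiv\mathrm{id}_L$ yields $\rho_*\propto\mathrm{id}_K\otimes\mathrm{id}_L$, the maximally mixed form; since the legs $\conj\omega\od\omega$ span the hermitian forms on $K$, small hermitian (still positive semidefinite) perturbations of $M\equiv\mathrm{id}_L$ surject onto a full neighbourhood of $\rho_*$, so $\inte\mathcal R\neq\emptyset$. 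For a convex set $\inte(\cl\mathcal R)=\inte\mathcal R$ once $\inte\mathcal R\neq\emptyset$, whence $\inte\cC_{sep}=\inte\cl\mathcal R=\inte\mathcal R\subseteq\mathcal R$: every separable form in the interior of $\cC_{sep}$ is exactly some $\rho_\Phi$.

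I expect the main obstacle to be the exactness asserted in (b). Single Fourier modes are not square integrable, so placing one product form at one frequency only approximates, and the honest route is the convex-geometric one above. Its crux is twofold: recognizing that the radial and phase freedom in $h$ promotes the $L$-leg to an arbitrary positive semidefinite field — which is precisely what makes $\mathcal R$ genuinely convex rather than merely dense — and exhibiting a true interior point of $\mathcal R$ (the maximally mixed form, via surjectivity of the leg map) so that the convex-set interior identity can upgrade the density $\cl\mathcal R=\cC_{sep}$ into exact representation throughout $\inte\cC_{sep}$.
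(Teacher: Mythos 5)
Your proposal is correct in substance, and while part (a) follows essentially the paper's own route, part (b) is a genuinely different argument. For (a) you and the paper do the same thing: pass to the Fourier side, observe via Parseval that $\rho_\Phi$ is an integral of the product forms $\conj{(\kappa\otimes\widehat\Phi(\kappa))}\oh(\kappa\otimes\widehat\Phi(\kappa))$, and invoke closedness of $\cC_{sep}$ (Proposition \ref{separ}); the only cosmetic difference is that the paper separates with dual functionals via Proposition \ref{char} (Hahn--Banach), whereas you approximate the cone-valued integral by finite sums --- two standard proofs of the same fact that an integral of a function valued in a closed convex cone stays in the cone. For (b) the paper proceeds constructively: it builds explicit Gaussian wave packets $f_w=h_wg_\alpha$, computes the resulting forms exactly (Lemmas \ref{gauss} and \ref{inter2}, Proposition \ref{interior}), shows they converge to any prescribed separable form as $\alpha\to\infty$, and then upgrades approximation to exact representation on the interior via the implicit function theorem in the parameter $\beta=\alpha^{-1/2}$ at $\beta=0$. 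You instead characterize the \emph{entire} realizable set on the Fourier side: writing $\rho_\Phi=\Lambda(M)$ with $M(\omega)=\int_0^\infty r^{2n+1}\conj{h(r\omega)}\oh h(r\omega)\,dr$ and showing every integrable positive semi-definite field $M$ on the sphere arises this way, so that the realizable set is the image of a convex cone under the linear map $\Lambda$, hence itself convex; then density ($\cl$ of it equals $\cC_{sep}$), an explicit interior point (the maximally mixed form), and the convex-analysis identity $\inte(\cl C)=\inte C$ finish the argument. What your route buys is structural cleanliness --- no Gaussian integral computations, no implicit function theorem, plus an exact description of the realizable cone; interestingly, it mirrors the paper's own proof of the torus case (Theorem \ref{hermitowskie-t}(b)), which likewise rests on convexity of the realizable set plus density, so you have in effect transported that strategy to $\C^n$. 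What the paper's route buys is explicitness: a finite closed-form family of representing maps $\Phi$ (formula (\ref{Phi})) and the explicit coefficient formula (\ref{formy}), which the paper reuses later. Two details you assert rather than prove should be filled in, but both are routine: (i) a measurable choice of $h$ realizing a given field $M$ (take the columns of $M(\omega)^{1/2}$, a continuous function of $M(\omega)$, distributed over disjoint radial shells, which also guarantees $h,|\varphi|h\in L^2$); and (ii) the covering of a neighbourhood of the maximally mixed form, which needs surjectivity of $f\mapsto\int_S f(\omega)\,\conj{\omega}\oh\omega\,d\omega$ onto hermitian forms on $K$ (a form orthogonal to the image vanishes on all $\omega$, hence is zero) together with a bounded right inverse into $L^\infty$ fields, so that small perturbations of $M\equiv\mathrm{id}$ remain positive semi-definite.
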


The same result holds with $\C^n$ replaced by the complex torus.
Recall that the $n$-dimensional complex torus is the quotient
group
$$ \C\T^n = \C^n\slash\Lambda, $$
with topology and Lebesgue measure inherited from $\C^n$, where
$\Lambda$ is the lattice
$$
\Lambda = \{(2\pi (a_1 + i b_1),\ldots,2\pi (a_n + i b_n)) \in\C^n
\ |\  a_k,b_k\in\Z, k=1,\ldots,n \}
$$
in $\C^n$. Given two points $z,\tilde{z}\in \C\T^n$, there is a
natural identification of the tangent spaces $T_z\C\T^n$ and
$T_{\tilde{z}}\C\T^n$ via the standard parallel shift in $\C^n$.
Therefore, as earlier, for any mapping $\Phi:\C\T^n\to \C^m$
having the weak differential $D_{\conj{z}}\Phi$ the linear map
$D_{\conj{z}}\Phi(z) : \C^n \rightarrow \C^m$ can be considered as
element of the dual space $H^*= (\C^n)^*\otimes\C^m$.

\begin{theorem}
\label{hermitowskie-t} Statements (a) and (b) of Theorem
\ref{hermitowskie} hold if we replace $\C^n$ with $\C\T^n$, i.e.,
for square integrable maps $\Phi:\C\T^n\to\C^m$, with square
integrable conjugate differential $D_{\conj{z}}\Phi$, and the
integral is taken over $\C\T^n$.
\end{theorem}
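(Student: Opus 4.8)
The plan is to prove both statements by repeating the argument behind Theorem \ref{hermitowskie}, with the Fourier transform on $\C^n$ replaced by the Fourier series on the compact group $\C\T^n$; the discreteness of the spectrum makes the separability in (a) transparent and forces a density argument in (b).

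For (a), I would expand a square integrable $\Phi$ in its Fourier series over the dual lattice. Writing $z_j = x_j + iy_j$, the characters of $\C\T^n$ are $e_{p,q}(z) = e^{i(p\cdot x + q\cdot y)}$ with $p,q\in\Z^n$, and one computes directly that $\conj{\partial}_{z_j} e_{p,q} = \nu_j(p,q)\,e_{p,q}$ with $\nu_j(p,q) = \tfrac12(ip_j - q_j)$. Hence, if $\hat\Phi(p,q)\in\C^m$ denotes the $(p,q)$-th Fourier coefficient of $\Phi$, the conjugate differential is diagonalized, $D_{\conj{z}}\Phi(z) = \sum_{p,q} M^{p,q}e_{p,q}(z)$, where $M^{p,q}$ is the rank-one $m\times n$ matrix $M^{p,q}_{ij} = \hat\Phi_i(p,q)\,\nu_j(p,q)$. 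Evaluating $\rho_\Phi(B,C) = \int_{\C\T^n}\conj{\tr(B\,D_{\conj{z}}\Phi)}\,\tr(D_{\conj{z}}\Phi\,C)\,d\mu$ by Parseval's identity, the off-diagonal modes drop out by orthogonality of the characters and I obtain $\rho_\Phi = \sum_{p,q}\conj{\sigma_{p,q}}\od\sigma_{p,q}$, where $\sigma_{p,q}\in H^*$ is the functional $B\mapsto \tr(M^{p,q}B)$. Since $M^{p,q}$ has rank one, $\sigma_{p,q}$ is a product functional $\nu(p,q)\otimes\hat\Phi(p,q) \in (\C^n)^*\otimes\C^m$, so each summand lies in $\cC_{prod}$. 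The series converges in the finite-dimensional space $\cC$ because $\sum_{p,q}|\hat\Phi(p,q)|^2\,|\nu(p,q)|^2 = \|D_{\conj{z}}\Phi\|_{L^2}^2 < \infty$; its finite partial sums are separable, and since $\cC_{sep}$ is closed (Proposition \ref{separ}), the limit $\rho_\Phi$ is separable.

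For (b), the same computation identifies which product forms arise as $\rho_\Phi$: taking $\Phi(z) = \psi\,e_{p,q}(z)$ gives exactly $\conj{(\nu(p,q)\otimes\psi)}\od(\nu(p,q)\otimes\psi)$ with $\psi\in\C^m$ arbitrary and $\nu(p,q)\in\tfrac12\Z[i]^n$. The admissible directions $[\nu(p,q)]$ have all coordinate ratios in $\Q(i)$, hence are dense in $\mathbb{CP}^{n-1}$; absorbing the magnitude of the covector into $\psi$, the achievable product forms are therefore dense in $\cC_{prod}$. To hit an interior point $\rho$ exactly I would use the robustness of interior points of the cone: write $\rho = \sum_{p=1}^P c_p\, g_p$ as a positive combination of product forms $g_p$ that positively span $\cC$ (possible since $\rho$ is interior), then perturb each $g_p$ to a nearby achievable product form $\tilde g_p$. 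Because $\rho$ remains in the interior of the positive hull of $\{\tilde g_p\}$, there are coefficients $\tilde c_p \ge 0$ with $\rho = \sum_p \tilde c_p\,\tilde g_p$. Choosing the corresponding frequencies distinct and setting $\hat\Phi$ equal to the matching vectors on them and zero elsewhere realizes this finite sum as $\rho_\Phi$ for a trigonometric-polynomial map $\Phi:\C\T^n\to\C^m$, which is manifestly square integrable with square integrable conjugate differential.

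The routine part is (a), once the Parseval computation and the rank-one structure of each Fourier mode are in place. The genuine obstacle is (b): because the spectrum of $\C\T^n$ is the discrete lattice $\tfrac12\Z[i]^n$, an arbitrary product form is only approximable and not exactly representable, so the exact representation of interior points rests entirely on the nonempty interior of $\cC_{sep}$ and the stability of conical positive combinations under small perturbations of their generators. Some care is also needed to keep the chosen frequencies distinct, so that no cross terms reappear in the Fourier computation.
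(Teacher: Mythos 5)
Your proposal is correct, and its computational core coincides with the paper's: both rest on the fact that the characters $e_{p,q}(z)=e^{i(p\cdot x+q\cdot y)}$ diagonalize $\conj{\partial}_{z_j}$ with eigenvalue $\nu_j(p,q)=\tfrac{1}{2}(ip_j-q_j)$, so that by Parseval the cross terms between distinct modes vanish and each Fourier mode of $\Phi$ contributes a rank-one, hence product, form; this is exactly the paper's identity $\widehat{\conj{\partial}_kf}(\alpha,\beta)=\tfrac{1}{2}i(\alpha_k+i\beta_k)\widehat{f}(\alpha,\beta)$ and its Lemma \ref{Tlemat}. The packaging differs at both ends. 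For (a), the paper reuses the dual-cone argument of Proposition \ref{char}: it pairs $\rho_\Phi$ against an arbitrary $\theta\in\cC^*$ that is nonnegative on $\cC_{prod}$ and checks $\langle\theta,\rho_\Phi\rangle\ge0$ mode by mode. You instead write $\rho_\Phi$ as a norm-convergent series of product forms and invoke closedness of $\cC_{sep}$ (Proposition \ref{separ}); this is slightly more elementary (no Hahn--Banach) and exhibits an explicit countable product decomposition, at the small extra cost of justifying convergence of the series in the finite-dimensional space $\cC$, which your $L^2$ bound does. For (b), the paper notes that the exactly representable forms (\ref{Togolny}), i.e.\ those with covectors in $(\Q+i\Q)^n$, form a convex set dense in $\cC_{sep}$, and concludes by the standard fact that a dense convex subset of a closed convex set with nonempty interior contains that interior. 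Your perturbation argument (represent an interior $\rho$ as a positive combination of product forms whose positive hull contains $\rho$ in its interior, then move each generator to a nearby exactly representable one with distinct frequencies) is in substance a hands-on proof of that same convexity fact; it works, and it makes explicit why interiority is exactly what is needed.

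Two caveats, neither fatal. First, your phrase ``product forms $g_p$ that positively span $\cC$'' cannot be meant literally: positive combinations of product forms are positive semi-definite, so they never exhaust $\cC$. What you need --- and what your next sentence actually uses --- is that $\rho$ lies in the interior of the positive hull of $\{g_p\}$; equivalently, that the $g_p$ linearly span $\cC$ and $\rho$ is a strictly positive combination of them (this is precisely what the paper arranges, for the $\C^n$ case, in Step 3 of its proof of Theorem \ref{hermitowskie}(b)). Second, the stability you invoke --- if $\rho$ is interior to the positive hull of $\{g_p\}$, then $\rho$ remains in the positive hull of $\{\tilde g_p\}$ for $\tilde g_p$ close to $g_p$ --- is true but is the crux of your Step (b) and deserves a line of proof: from interiority one may choose a representation $\rho=\sum_p c_p g_p$ with every $c_p>0$ and with $\{g_p\}$ containing a basis of $\cC$; the error $\sum_p c_p(\tilde g_p-g_p)$ can then be corrected using the perturbed basis, and for small enough perturbations the corrected coefficients stay positive.
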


Both theorems will be proved in the following two sections.  We
will also show (Theorem \ref{hermitowskie2}) that not all
hermitian, positive semi-definite forms are integrally
representable.  Such forms lie in the boundary of the cone of all
separable forms.


\section{Separability of $\rho_\Phi$}

 In proving statements (a) of both theorems we will use
Fourier transform and the Hahn-Banach theorem.

The following elementary facts will be used in the proof. Given a
function $f: \C \rightarrow \C$, we can write it as a complex
valued function $\R^2\to\C$ by identifying $f(x+iy)=f(x,y)$.
Assuming that it is differentiable at $z=x+iy$, we have
\begin{equation}
\label{equivalence}
 \conj{\partial}_z f(x+iy)  =
\frac{1}{2}(\partial_x f(x,y) + i\partial_y f(x,y)).
\end{equation}
Consider the Fourier transform of $f:\R^2\to \C$,
\begin{eqnarray*}
\widehat{f}(\xi,\zeta)
  =  \frac{1}{{2\pi}}\int_{\R^{2}}e^{-i( x\xi + y\zeta)}f(x,y) \ dxdy.
\end{eqnarray*}
Then integration by parts gives
\begin{eqnarray*}
\widehat{\partial_{x}f}(\xi,\zeta) = i\xi\widehat{f}(\xi,\zeta),
\qquad \widehat{\partial_{y}f}(\xi,\zeta) =
i\zeta\widehat{f}(\xi,\zeta).
\end{eqnarray*}
Using (\ref{equivalence}) and taking $\kappa=\xi+i\zeta$, we
aggregate this in the complex expression
\begin{eqnarray}
\label{a} \widehat{\conj{\partial}_zf}(\kappa) & = &
\frac{1}{2}\left(\widehat{\partial_{x}f}(\xi,\zeta)
+ i \widehat{\partial_{y}f}(\xi,\zeta)\right) \nonumber \\
& = & \frac{1}{2}i\left(\xi\widehat{f}(\xi,\zeta) +
i\zeta\widehat{f}(\xi,\zeta)\right)
 = \frac{1}{2}i\kappa\widehat{f}(\kappa).
\end{eqnarray}

Let $X$ be a finite dimensional vector space (or more generally, a
Banach space). We shall need the following property, which follows
from the Hahn-Banach theorem by a standard separation argument.

 \begin{proposition}\label{char}
Let  $S\subset X$ be a subset and $C\subset X$ be the smallest
convex cone containing $S$. If $C$ is closed and $x_0\in X$ is an
element satisfying
$$
\langle y,x_0\rangle \ge0, \quad \text{for any $y\in X^*$ such
that $\langle y,x\rangle \ge0$ for all $x\in S$},
$$
then $x_0$ lies in $C$.
 \end{proposition}

We will concentrate on a linear space $\cC$ over $\R$ of all
hermitian 2-forms $\rho : H\times H \rightarrow \C$ and its dual
$\cC^*$, with the duality product
$\langle\cdot,\cdot\rangle:\cC^*\times\cC \rightarrow \R$. Take
any linear coordinates in $K$ and $L$. We have associated
coordinates in $\cC$ and dual coordinates in $\cC^*$. We can
express $\theta = (\theta_{ijkl}) \in \cC^*$ and $\rho =
(\rho_{ijkl}) \in \cC$, in these coordinates, where
$\conj{\theta_{ijkl}} = \theta_{klij}$ and $\conj{\rho_{ijkl}} =
\rho_{klij}$. Then
$$
\langle\theta,\rho\rangle = \sum_{ijkl} \theta_{ijkl}\rho_{ijkl}.
$$

{\bf Proof of Theorem \ref{hermitowskie} (a).}
 Recall that $\cC_{prod}$ denotes the set of product hermitian 2-forms,
and $\cC_{sep}$ the set of separable hermitian 2-forms. Since the
convex hull of $\cC_{prod}$ is equal to $\cC_{sep}$ and it is a
closed cone in $\cC$ then, by Proposition \ref{char}, $\rho_\Phi
\in \cC_{sep}$ if $\langle\theta,\rho_\Phi\rangle \geq 0$ for all
$\theta \in \cC^*$ such that
$$
\langle\theta,\rho\rangle \geq 0, \qquad \text{for any
}\rho\in\cC_{prod}.
$$

Denote, for brevity, $\conj{\partial}_j = \conj{\partial}_{z_j}$.
For $\theta \in \cC^*$ in the conjugate cone to the cone of
separable states and  $\Phi = (\Phi_1,\ldots,\Phi_m)$ we can write
\begin{eqnarray*}
\langle\theta,\rho_\Phi\rangle
     =  \sum_{ijkl}\theta_{ijkl} \frac{1}{(2i)^n}\int_{\C^n}
     \conj{(\cpar_j\Phi_i(z))}\,\cpar_l\Phi_k(z) \ d\conj{z}\wedge dz .\\
\end{eqnarray*}
Using Fourier transform and Perseval's equality $\int
f^*gd\conj{z}\wedge dz=\int \widehat{f}^*\widehat
gd\conj{\kappa}\wedge d\kappa$ we get
$$
\int_{\C^n} \conj{(\cpar_j\Phi_i(z))}\,\cpar_l\Phi_k(z) \
d\conj{z}\wedge dz =
 \int_{\C^n} \conj{(\widehat{\conj{\partial}_j\Phi_i}(\kappa))}\
 (\widehat{\conj{\partial}_l\Phi_k}(\kappa)) \ d\conj{\kappa}\wedge
 d\kappa.
$$
Thus (\ref{a}), and the fact that $\theta$ is positive on product
states gives
\begin{eqnarray*}
\langle\theta,\rho_\Phi\rangle & = &
 \frac{1}{4(2i)^n}\int_{\C^n}\sum_{ijkl}\theta_{ijkl}
 \conj{(i\kappa_j\widehat{\Phi}_i(\kappa))}
 (i\kappa_l\widehat{\Phi}_k(\kappa)) \ d\conj{\kappa}\wedge
 d\kappa \geq 0,
\end{eqnarray*}
which ends the proof.
\endproof

\medskip

 {\bf Proof of Theorem \ref{hermitowskie-t} (a).}
 The proof is analogous to the previous one. The only thing that one has
 to observe is the following. Given a square integrable function
 $f:\C\T^n\to\C$, its Fourier transform $\hat{f}:\Z^{n}\times\Z^{n}\to\C$
 is given by the formula
\begin{eqnarray*}
\hat{f}(\alpha ,\beta)
  =  \frac{1}{(4\pi i)^n}\int_{\C \T^n}
  e^{-i( \langle x,\alpha\rangle + \langle y,\beta\rangle)}f(z)
  \ d\conj{z}\wedge dz,
\end{eqnarray*}
where $z=x+iy$ mod $2\pi(\Z+ i\Z)$ are points in $\C\T^n$,
$\langle x,\alpha\rangle = \sum x_i \alpha_i$ and $\langle
y,\beta\rangle = \sum y_i \beta_i$. Then integration by parts
gives
\begin{eqnarray*}
\widehat{\conj{\partial}_kf}(\alpha,\beta)
 = \frac{1}{2}i(\alpha_k+i\beta_k)\widehat{\Phi}(\alpha,\beta),
\end{eqnarray*}
and the Perseval's equality reads as
$$
\frac{1}{(2i)^n} \int_{\C\T^n}  \conj{(f(z))}g(z) \
d\conj{z}\wedge dz = \sum_{(\alpha,\beta)\in\Z^{2n}}
\conj{(\hat{f}(\alpha,\beta))} (\hat{g}(\alpha,\beta)),
$$
where $g:\C\T^n\to\C$ is another square integrable function.
\endproof

\section{Construction of $\Phi$ in Theorem \ref{hermitowskie}(b)}

The construction of maps $\Phi$ which produce or approximate an
arbitrary separable hermitian form is divided into three steps.
First we will prove that any product hermitian 2-form can be
arbitrarily closely approximated by the hermitian 2-forms
$\rho_\Phi$. Then we approximate any separable hermitian 2-form.
Finally, we prove that any separable hermitian 2-form in the
interior of all separable hermitian 2-forms can be expressed in
the integral form (\ref{przedstawienie}).

\vspace{0,5cm} \noindent\textbf{Step 1. Approximation of product
forms.}

\noindent Consider the function $f_w:\C^n\to\C$ given by
\begin{eqnarray}
\label{f} f_w(z) =  h_w(z)\,g_\alpha(z),
\end{eqnarray}
where  $w\in\C^n$ and $\alpha > 0$ are fixed and
$$
h_w(z)=e^{\langle z,w\rangle - \langle w,z\rangle} = e^{2i \Im
\langle z,w\rangle}, \qquad g_\alpha(z)= (\pi \alpha)^{-n/2}\
e^{-\langle z,z\rangle/ 2\alpha},
$$
with $\langle z,w\rangle=\sum \conj{z}_jw_j$. Clearly,
$$
\conj{\partial}_jf_w(z)=(w_j-\frac{z_j}{2\alpha})f_w(z).
$$
 We have $|h_w(z)|=1$ and, writing $z_j=x_j+iy_j$,
$$
|f_w(z)|= \frac{1}{(\pi \alpha)^{n/2}}
e^{-\sum(x_j^2+y_j^2)/2\alpha}.
$$
This allows to prove the following lemma.

\begin{lemma}
\label{gauss}
$$
 \frac{1}{(2i)^n}\int_{\C^n} \conj{(\conj{\partial}_j f_w(z))}\conj{\partial}_l f_w(z) \ d\conj{z}\wedge dz =
 \conj{w}_j w_l +\frac{1}{4\alpha}\delta_{j_l}.
$$
\end{lemma}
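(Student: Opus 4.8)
The plan is to evaluate the integral in Lemma \ref{gauss} by direct computation, exploiting the explicit Gaussian form of $f_w$. First I would recall from the preceding remarks that $\conj{\partial}_j f_w(z) = (w_j - z_j/(2\alpha)) f_w(z)$, so that the integrand becomes
\begin{eqnarray*}
\conj{(\conj{\partial}_j f_w(z))}\,\conj{\partial}_l f_w(z)
 = \conj{(w_j - z_j/(2\alpha))}\,(w_l - z_l/(2\alpha))\,|f_w(z)|^2.
\end{eqnarray*}
Since $|h_w(z)| = 1$, the oscillatory factor drops out and $|f_w(z)|^2 = (\pi\alpha)^{-n}\,e^{-\sum_k(x_k^2+y_k^2)/\alpha}$ is a genuine real Gaussian on $\R^{2n}$. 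Using the normalization $d\conj{z}\wedge dz = (2i)^n\,\prod_k dx_k\,dy_k$, the prefactor $1/(2i)^n$ cancels the wedge-product Jacobian exactly, leaving an ordinary Lebesgue integral against a normalized Gaussian density.

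Next I would expand the product $\conj{(w_j - z_j/(2\alpha))}(w_l - z_l/(2\alpha)) = (\conj{w}_j - \conj{z}_j/(2\alpha))(w_l - z_l/(2\alpha))$ into four terms and integrate each against the Gaussian. The key observation is that the Gaussian measure is centered at the origin, so all first moments vanish: $\int z_k\,|f_w|^2 = \int \conj{z}_k\,|f_w|^2 = 0$. This kills the two cross terms, leaving $\conj{w}_j w_l$ from the constant term, plus $\frac{1}{4\alpha^2}\int \conj{z}_j z_l\,|f_w(z)|^2\,d\mu$ from the quadratic term. The remaining second-moment integral factorizes over the coordinates, and the only nonvanishing contribution is the diagonal one $j=l$: one computes $\int \conj{z}_j z_l\,|f_w|^2\,d\mu = \delta_{jl}\,\alpha$, using the standard one-dimensional second moment $\int (x^2+y^2)\,(\pi\alpha)^{-1}e^{-(x^2+y^2)/\alpha}\,dx\,dy = \alpha$. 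Multiplying by $\frac{1}{4\alpha^2}$ yields $\frac{1}{4\alpha}\delta_{jl}$, and adding the two contributions gives exactly $\conj{w}_j w_l + \frac{1}{4\alpha}\delta_{jl}$ as claimed.

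There is essentially no conceptual obstacle here; the statement is a moment computation for a Gaussian, and the main thing to get right is bookkeeping: verifying that the $1/(2i)^n$ factor and the wedge-product orientation combine to produce the correct positive real Lebesgue measure, and confirming the normalization $(\pi\alpha)^{-n}$ so that the zeroth moment is $1$ and the second moment per coordinate is $\alpha$. The only mild subtlety is checking that the cross terms genuinely vanish by oddness rather than merely being small, which is immediate once one notes that $|f_w|^2$ is even in each $x_k$ and each $y_k$ separately while $z_k$ is linear in these variables. I would present the computation coordinate-by-coordinate to make the factorization transparent, then reassemble.
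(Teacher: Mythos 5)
Your proposal is correct and follows essentially the same route as the paper: expand the integrand using $\conj{\partial}_j f_w = (w_j - z_j/2\alpha)f_w$ into four terms, kill the cross terms (and the off-diagonal quadratic term) by oddness of the centered Gaussian, and evaluate the zeroth and second Gaussian moments, with the $1/(2i)^n$ factor accounting for the passage from $d\conj{z}\wedge dz$ to Lebesgue measure. The moment values you quote ($1$ for the normalized zeroth moment, $\alpha$ for $\int|z_j|^2$ against the normalized density, hence $\frac{1}{4\alpha}\delta_{jl}$) agree with the paper's computation via $\int_\R e^{-x^2/\alpha}dx=\sqrt{\pi\alpha}$ and $\int_\R x^2e^{-x^2/\alpha}dx=\sqrt{\pi\alpha}\,\alpha/2$.
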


\proof
 We have (all integrals are taken with respect to the measure
 $d\conj{z}\wedge dz$)
\begin{eqnarray*}
\int_{\C^n} \conj{(\conj{\partial}_j f_w(z))}\conj{\partial}_l
f_w(z)
 & = & \int_{\C^n} \conj{(w_j-z_j/2\alpha)}(w_l -z_l/2\alpha) |f_w(z)|^2 \\
 & = & \conj{w}_jw_l \int_{\C^n} |f_w(z)|^2
       + \int_{\C^n} \frac{\conj{z}_jz_l}{4\alpha^2}|f_w(z)|^2 \\
 & - & \frac{\conj{w}_j}{2\alpha}\int_{\C^n} z_l |f_w(z)|^2
 - \frac{w_l}{2\alpha}\int_{\C^n} \conj{z}_j |f_w(z)|^2 \\
 & = & (2i)^n(\conj{w}_j w_l +\frac{1}{4\alpha}\delta_{jl}),
\end{eqnarray*}
where the first integral is computed using the Fubini theorem in
the form
\begin{eqnarray*}
\int_{\C^n}|f_w|^2
 & = & \frac{1}{(\pi \alpha)^{n}}\prod_{s=1}^n \int_{\C^1}e^{-(x_s^2+y_s^2)/\alpha}d\conj{z}_s\wedge
 dz_s\\
 & = & \Big(\frac{2i}{\pi \alpha}\Big)^n \prod_{s=1}^n \int_{\R^1}e^{-x_s^2/\alpha}dx_s
\int_{\R^1}e^{-y_s^2/\alpha}dy_s
\end{eqnarray*}
and the standard integral
\begin{eqnarray}
\label{gauss0} \int_{\R} e^{-x^2/\alpha}  \ dx = \sqrt{\pi\alpha},
\end{eqnarray}
while in computing the second one with $j=l$ we use the above
integral and
\begin{eqnarray}
\label{gauss2} \int_{\R} x^2e^{-x^2/\alpha}  \ dx =
\sqrt{\pi\alpha}\ \frac{\alpha}{2}.
\end{eqnarray}
The third and the fourth integrals are equal to zero as, after
applying the Fubini theorem we obtain a factor of the form
$\int_\C z_l e^{-|z_l|^2/\alpha} d\conj{z}_l\wedge
dz_l$, which is zero since the real and imaginary parts of the
integrated function are antisymmetric with respect to
corresponding (real or imaginary) axes. The same argument implies
vanishing of the second integral, when $j\ne l$.
 \endproof

We shall approximate the hermitian 2-form
$\conj{\sigma}\oh\sigma$, where $\sigma = \varphi\otimes\psi$ and
$\varphi \in \C^m$, $\psi \in (\C^n)^*$. To do that, let us take
$\alpha>0$ and consider the maps $\Phi:\C^n\to\C^m$ given by
\begin{eqnarray*}
\Phi(z) =\varphi \ \frac{1}{(\pi\alpha)^{n/2}} e^{\langle
z,\psi\rangle - \langle \psi,z\rangle}e^{-\langle z,z\rangle/
2\alpha} =\varphi f_\psi(z),
\end{eqnarray*}
where $f_\psi$ is the function defined in (\ref{f}), with
$w=\psi$, and we denote $\langle \psi,z \rangle=\sum
\conj{\psi}_iz_i$. By Lemma \ref{gauss} we have
\begin{eqnarray*}
(\rho_\Phi)_{ijkl} & = & \conj{\varphi}_i \varphi_k
\frac{1}{(2i)^n}\int_{\C^n} \conj{(\cpar_j{f_\psi(z)})}
\cpar_l{f_\psi(z)} \ d\conj{z}\wedge dz
\\
& = &  \conj{\varphi}_i \varphi_k (\conj{\psi}_j \psi_l +
\frac{1}{4\alpha} \delta_{jl}).
\end{eqnarray*}
We see that, with $\alpha$ sufficiently large,  the hermitian
2-form $\conj{\sigma}\oh\sigma$ can be arbitrarily closely
approximated by $\rho_\Psi$.
\endproof

\medskip \noindent\textbf{Step 2. Approximation of separable
forms.}

\noindent To prove that any separable state can be arbitrarily
closely approximated by states in the integral form we use the
result for product states. We start with a technical lemma.

\begin{lemma}
\label{inter2} For $f_v$ and $f_w$ of the form (\ref{f}) we have
\begin{eqnarray*}
\lefteqn{ \frac{1}{(2i)^n}\int_{\C^n} \conj{(\conj{\partial}_j
f_v(z))}\conj{\partial}_l f_w(z) \ d\conj{z}\wedge dz =
} \\
&& = \frac{1}{4}\skladnik{v}{w}.
\end{eqnarray*}
\end{lemma}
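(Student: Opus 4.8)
The plan is to generalize the computation behind Lemma~\ref{gauss} to the case of two distinct parameters $v$ and $w$. Using $\conj{\partial}_j f_v = (v_j - z_j/2\alpha)f_v$ together with its complex conjugate, the integrand becomes
$$
(\conj{v}_j - \conj{z}_j/2\alpha)(w_l - z_l/2\alpha)\,\conj{f_v}\,f_w ,
$$
so the whole problem reduces to integrating a single Gaussian-type weight against the monomials $1$, $\conj{z}_j$, $z_l$ and $\conj{z}_j z_l$. The preliminary and most important step is therefore to understand the product $\conj{f_v}f_w$.

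Here the key observation is that $\conj{f_v}f_w$ is again a Gaussian times a single pure-phase factor. Since $g_\alpha$ is real and the exponents of $h_v$ and $h_w$ are purely imaginary and simply add, one gets $\conj{h_v}\,h_w = h_{w-v}$, and hence $\conj{f_v}f_w = h_{w-v}\,g_\alpha^2$ with $g_\alpha^2 = (\pi\alpha)^{-n}e^{-\langle z,z\rangle/\alpha}$. Both the Gaussian and the phase $h_{w-v}$ factor over the $n$ complex coordinates, so the entire integral factors into a product of one-dimensional integrals, only those with $s=j$ and $s=l$ being nontrivial.

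Next I would evaluate the one-dimensional moments of the oscillatory weight $h_u g_\alpha^2$, writing $u = w-v$. The zeroth moment is exactly the Gaussian Fourier transform already recorded in (\ref{a}) and contributes a factor $2i\,e^{-\alpha|u_s|^2}$ per coordinate (with the measure $d\conj{z}_s\wedge dz_s$), which on assembling over all coordinates produces the factor $e^{-\alpha|v-w|^2}$ of the claimed right-hand side. The first moments $\int \conj{z}\,h_u g_\alpha^2$ and $\int z\,h_u g_\alpha^2$ are then obtained most cleanly by differentiating the zeroth moment in the phase parameters, using $\partial_u h_u = \conj{z}\,h_u$ and $\partial_{\conj{u}} h_u = -z\,h_u$; the diagonal case $j=l$ in addition needs the mixed second moment $\int \conj{z}z\,h_u g_\alpha^2$, obtained by differentiating once more. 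These are the exact analogues of the elementary integrals (\ref{gauss0}) and (\ref{gauss2}) used in Lemma~\ref{gauss}.

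Finally I would substitute the moments back into the factorized product, treating $j\ne l$ (two independent first-moment factors, the remaining $n-2$ coordinates contributing zeroth moments) and $j=l$ (one coordinate carrying the second moment, which is where the $\delta_{jl}/\alpha$ term originates) separately, and then collect the prefactor $\frac{1}{(2i)^n}$ against the $2i$ produced by each coordinate. I expect the only real obstacle to be the bookkeeping of these oscillatory Gaussian moments -- in particular fixing the sign and the $\alpha$-scaling of the linear shift (proportional to $\conj{(w-v)}$) correctly, since it is precisely this shift that turns the two derivative factors into the stated bilinear expression in $v$ and $w$. A convenient internal check is that setting $v=w$ collapses the phase to $1$ and must recover Lemma~\ref{gauss}.
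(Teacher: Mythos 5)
Your plan is, in method, exactly the paper's own proof: expand the integrand using $\conj{\partial}_jf_v=(v_j-z_j/2\alpha)f_v$, factor the weight $\conj{f_v}f_w=h_{w-v}\,g_\alpha^2$ over the coordinates, and evaluate one-dimensional oscillatory Gaussian moments; the paper quotes the three required integrals explicitly, while you generate them by differentiating the zeroth moment in the parameters, which is equivalent. Carrying your recipe out, with $u=w-v$ and the normalization $(2i)^{-n}d\conj{z}\wedge dz$, the moments of $1$, $z_l$, $\conj{z}_j$ and $\conj{z}_jz_l$ against the weight $h_ug_\alpha^2$ are, respectively,
\[
e^{-\alpha|u|^2},\qquad \alpha u_l\,e^{-\alpha|u|^2},\qquad -\alpha\conj{u}_j\,e^{-\alpha|u|^2},\qquad \bigl(-\alpha^2\conj{u}_ju_l+\alpha\delta_{jl}\bigr)e^{-\alpha|u|^2}.
\]

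The genuine problem is the final assembly, and it is not mere bookkeeping: these moments do \emph{not} produce the stated right-hand side. Collecting the four terms gives
\[
\conj{v}_jw_l-\tfrac{1}{2}\conj{v}_ju_l+\tfrac{1}{2}\conj{u}_jw_l-\tfrac{1}{4}\conj{u}_ju_l
=\bigl(\conj{v}_j+\tfrac{1}{2}\conj{u}_j\bigr)\bigl(w_l-\tfrac{1}{2}u_l\bigr)
=\tfrac{1}{4}\,\conj{(v_j+w_j)}(v_l+w_l),
\]
hence
\[
\frac{1}{(2i)^n}\int_{\C^n}\conj{(\conj{\partial}_jf_v)}\,\conj{\partial}_lf_w\ d\conj{z}\wedge dz
=\frac{1}{4}\Bigl(\conj{(v_j+w_j)}(v_l+w_l)+\frac{\delta_{jl}}{\alpha}\Bigr)e^{-\alpha|v-w|^2},
\]
which differs from the printed $\frac{1}{4}\bigl(\conj{(3v_j-w_j)}(3w_l-v_l)+\delta_{jl}/\alpha\bigr)e^{-\alpha|v-w|^2}$. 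The lemma as stated is in fact false: for $n=1$, $\alpha=1$, $v=1$, $w=-1$ the integral equals $\tfrac{1}{4}e^{-4}$, while the printed formula gives $-\tfrac{15}{4}e^{-4}$; positivity of the true value is forced, e.g., by Parseval, since $\widehat{f_v}$ and $\widehat{f_w}$ are real nonnegative Gaussians, so the matrix whose $(j,l)$ entry is the left-hand side is Hermitian positive semi-definite. Note that the printed expression is exactly what comes out if one flips the signs of both first moments (to $-\alpha u_l$ and $+\alpha\conj{u}_j$): then the bilinear part factors as $\bigl(\conj{v}_j-\tfrac{1}{2}\conj{u}_j\bigr)\bigl(w_l+\tfrac{1}{2}u_l\bigr)=\tfrac{1}{4}\conj{(3v_j-w_j)}(3w_l-v_l)$. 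So the sign of the linear shift, which you yourself singled out as the main pitfall, is precisely where the published statement went wrong; and your only proposed check, $v=w$, cannot catch this, because both expressions collapse to Lemma \ref{gauss} on the diagonal. Executed honestly, your plan proves the corrected identity above and refutes the printed one. (For the paper this slip is harmless: in Step 2 only the diagonal terms $p=q$ survive the limit $\alpha\to\infty$, and there the two formulas agree.)
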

\begin{proof}
Note that
\begin{eqnarray*}
\int_{\C^n} \conj{(\conj{\partial}_j f_v(z))}\conj{\partial}_l
f_w(z) & = & \conj{v}_j w_l \int_{\C^n} \conj{f}_v(z) f_w(z) +
\int_{\C^n} \frac{\conj{z}_j z_l}{4\alpha} \conj{f}_v(z) f_w(z)
\\
&& - \frac{\conj{v}_j}{2\alpha} \int_{\C^n} z_l \conj{f}_v(z)
f_w(z) - \frac{w_l}{2\alpha} \int_{\C^n} \conj{z}_j \conj{f}_v(z)
f_w(z),
\end{eqnarray*}
where all integrals are taken with respect to $d\conj{z}\wedge
dz$. With $z_j=x_j+iy_j$ we have
\begin{eqnarray*}
\conj{f}_v(z) f_w(z)
 & = & \conj{h}_v(z)h_w(z)g^2_\alpha(z) =
       e^{-2i \Im \langle z,v\rangle}  e^{2i \Im \langle z,w\rangle}
       g^2_\alpha(z)\\
 & = & (\pi\alpha)^{-n} \prod_{s=1}^n
       e^{2ix_s\Im(w_s-v_s)} e^{-2iy_s\Re(w_s-v_s)}e^{-(x_s^2 +
       y_s^2)/\alpha}.
\end{eqnarray*}
 Thus the proof of the lemma is a straightforward
calculation analogous to the calculations in the proof of Lemma
\ref{gauss}, with the use of three additional integrals
\begin{eqnarray*}
\int_{\R} e^{-x^2/\alpha} e^{i\gamma x}  \ dx & = & \sqrt{\pi\alpha} \ e^{-\frac{\alpha\gamma^2}{4}}, \\
\int_{\R} xe^{-x^2/\alpha} e^{i\gamma x}  \ dx & = & \sqrt{\pi\alpha} \ \frac{i\alpha\gamma}{2}  e^{-\frac{\alpha\gamma^2}{4}}, \\
\int_{\R} x^2e^{-x^2/\alpha} e^{i\gamma x}  \ dx & = &
\sqrt{\pi\alpha} \ \frac{(2\alpha - \alpha^2\gamma^2)}{4}
e^{-\frac{\alpha\gamma^2}{4}}.
\end{eqnarray*}
\end{proof}

Now consider a separable hermitian 2-form
$$
\rho = \sum_{p=1}^{P} \conj{(\sigma^p)} \oh \sigma^p, \\
$$
where $\sigma^p = \varphi^p \otimes \psi^p$, $\varphi^p\in\C^m$,
$\psi^p\in(\C^*)^n$ and we additionally require that $\psi^p\neq
\psi^q$ for $p\neq q$, and $P\geq 1$. The set of such 2-forms is
dense in the set of all separable 2-forms. Thus, it is sufficient
to approximate such 2-forms. To do that we consider the map
$\Phi:\C^n\to\C^m$,
\begin{eqnarray}
\label{Phi} \Phi(z) = \sum_{p=1}^{P} \varphi^p f_{\psi^p} =
\frac{1}{(\pi\alpha)^{n/2}} \sum_{p=1}^{P}\varphi^p \ e^{\langle
z,\psi^p\rangle-\langle\psi^p,z\rangle} e^{-\langle z,z\rangle/
2\alpha},
\end{eqnarray}
where we use  formula (\ref{f}) for $f_{\psi^p}$'s. By Lemma
\ref{inter2} we have
\begin{eqnarray*}
(\rho_\Phi)_{ijkl} & = & \sum_{p,q} \conj{(\varphi_i^p)}
\varphi_k^q \frac{1}{(2i)^n}\int_{\C^n}
\conj{(\cpar_j{f_{\psi_p}(z)})} \cpar_l{f_{\psi_q}(z)} \
d\conj{z}\wedge dz
\\
& = & \frac{1}{4}\sum_{p,q} \conj{(\varphi^p_i)} \varphi_k^q
\skladnik{\psi^p}{\psi^q} .
\end{eqnarray*}
Since $e^{-\alpha|\psi^p-\psi^q|^2}\to 0$, unless $p=q$,  this
expression converges to
\begin{eqnarray*}
\sum_{p=1}^P \conj{(\varphi^p_i)} \varphi_k^p
\conj{(\psi_j^p)}\psi_l^p
\end{eqnarray*}
when $\alpha$ tends to infinity. Thus we can approximate the form
$\rho$ with integrally representable forms $\rho_\Psi$.
\endproof

From the above proof we can easily obtain the first part of the
following proposition.

\begin{proposition}
\label{interior} For any $ \ P\in\N,\ \varphi_1,\ldots,\varphi_P
\in \C^m$, $\psi_1,\ldots,\psi_P \in (\C^n)^*$ and $\alpha >0$,
the hermitian 2-form $\rho$ with coefficients
\begin{eqnarray}
\label{formy} \rho_{ijkl}  = \frac{1}{4}\sum_{p, q}
\conj{(\varphi^p_i)} \varphi_k^q \skladnik{\psi^p}{\psi^q}
\end{eqnarray}
is an integrally representable separable hermitian 2-form and
$\rho=\rho_\Phi$ with $\Phi$ given in (\ref{Phi}). Moreover, every
hermitian 2-form in the interior of the cone of all separable
2-forms is of the form (\ref{formy}).
\end{proposition}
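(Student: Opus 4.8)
The first assertion requires no work: by Lemma \ref{inter2} the coefficients of $\rho_\Phi$ with $\Phi$ as in (\ref{Phi}) are precisely those displayed in (\ref{formy}), so every form with coefficients (\ref{formy}) equals $\rho_\Phi$, and its separability is Theorem \ref{hermitowskie}(a). The real content is the ``moreover'' statement, that each $\rho\in\inte\cC_{sep}$ is \emph{exactly} of the form (\ref{formy}). My plan is to start from the approximate representation of Step 2 and to correct its (vanishing) error using the slack available at an interior point.

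First I would fix a robust product decomposition of $\rho$. Since $\cC_{sep}=\co\,\cC_{prod}$ and $\rho$ is interior, a refinement of Carath\'eodory's theorem (Steinitz) yields finitely many product forms $\pi_r=\conj{(\sigma^r)}\oh\sigma^r$, with $\sigma^r=\varphi^r\otimes\psi^r$ and $r=1,\dots,R$, such that $\rho$ lies in the \emph{interior} of $\co\{\pi_1,\dots,\pi_R\}$. In particular $\rho=\sum_r\lambda_r\pi_r$ with every $\lambda_r>0$ (mixing with the centroid of the $\pi_r$ if necessary), and the $\pi_r$ affinely span $\cC$, so $\span\{\pi_r\}=\cC$. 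As the condition $\rho\in\inte\co\{\pi_r\}$ is open in the tuple $(\pi_1,\dots,\pi_R)$, I may also perturb the $\pi_r$ inside $\cC_{prod}$ to make the $\psi^r$ pairwise distinct.

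Next I would set up a finite-dimensional equation. For $t=(t_1,\dots,t_R)$ with $t_r>0$, let $\Phi_t=\sum_r t_r\,\varphi^r f_{\psi^r}$, which has the form (\ref{Phi}). By (\ref{formy}),
\[
G_\alpha(t):=\rho_{\Phi_t}=\sum_r t_r^2\,\pi_r+E_\alpha(t)=:G_\infty(t)+E_\alpha(t),
\]
where $E_\alpha(t)$ gathers the diagonal $\tfrac1{4\alpha}$-fuzz together with the off-diagonal ($r\neq r'$) terms, each carrying a factor $e^{-\alpha|\psi^r-\psi^{r'}|^2}$. Distinctness of the $\psi^r$ gives $\min_{r\neq r'}|\psi^r-\psi^{r'}|^2>0$, whence $E_\alpha\to0$ in $C^1$, uniformly on bounded $t$-sets, as $\alpha\to\infty$. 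Choosing $t_0$ with $t_{0,r}^2=\lambda_r>0$ gives $G_\infty(t_0)=\rho$ and $dG_\infty(t_0)\,\delta t=\sum_r 2t_{0,r}\,\delta t_r\,\pi_r$; since each $t_{0,r}>0$, this ranges over $\span\{\pi_r\}=\cC$, so $G_\infty$ is a submersion at $t_0$.

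Finally I would run a perturbation (inverse function) argument, and I expect this exact-correction step to be the crux. Surjectivity of a linear map is stable under small perturbations and $dG_\alpha(t_0)\to dG_\infty(t_0)$, so for $\alpha$ large $G_\alpha$ is still a submersion near $t_0$; composing with a fixed right inverse of $dG_\infty(t_0)$ and using $\|G_\alpha-G_\infty\|_{C^1}\to0$ near $t_0$, a contraction-mapping argument produces $t_\alpha$ near $t_0$ with $G_\alpha(t_\alpha)=\rho$ \emph{exactly}. Then $\Phi=\Phi_{t_\alpha}$ has the form (\ref{Phi}) and $\rho=\rho_\Phi$ is of the form (\ref{formy}). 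The difficulty is the exact cancellation of $E_\alpha$, not its smallness: this is where the interior hypothesis is essential, entering through the submersivity of $G_\infty$ at $t_0$, which itself rests on the convex-geometric fact that an interior point lies in the interior of the convex hull of finitely many product forms spanning $\cC$. A boundary form admits no such surrounding family, consistent with the non-representable boundary forms of Theorem \ref{hermitowskie2}.
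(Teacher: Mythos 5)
Your proposal is correct and takes essentially the same route as the paper's Step 3: both rest on a strictly positive decomposition of the interior form over a family spanning $\cC$, on the map (\ref{Phi}) with square-root coefficients as parameters so that, by (\ref{formy}), the resulting form equals the target combination plus an error vanishing smoothly as $\alpha\to\infty$ (equivalently, smooth in $\beta=\alpha^{-1/2}$ at $\beta=0$), and on an implicit-function/perturbation argument to cancel that error exactly. The only differences are in implementation: the paper takes $D=(nm)^2$ linearly independent \emph{separable} forms $\rho^d$ with $\rho=\sum_d\lambda^\rho_d\rho^d$, all $\lambda^\rho_d>0$, giving a square system solved directly by the implicit function theorem, whereas you invoke Steinitz's theorem to place $\rho$ in the interior of the hull of finitely many \emph{product} forms and then solve a rectangular system via a right inverse and a contraction argument --- an equally valid and in fact more explicit justification of the positive spanning decomposition, which the paper asserts with little detail.
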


The second part of this proposition is equivalent to the statement
of Theorem \ref{hermitowskie} (b). Therefore, to end the proof of
the theorem we need to prove the proposition. Before we do that we
make a few remarks about integrally representable hermitian
2-forms, which follow from (\ref{formy}).

\begin{remark}\label{rem-5.4}
Assume that $\varphi_p = \varphi$ for  $p = 1,\ldots,P$. Then
(\ref{formy}) reduces to
\begin{eqnarray*}
\rho_{ijkl} = \frac{1}{4}\conj{\varphi}_i \varphi_k \sum_{p, q}
\skladnik{\psi^p}{\psi^q}
\end{eqnarray*}
and $\rho$ is of the form
$(\conj{\varphi}\oh\varphi)\ot\tilde{\rho}$, where $\tilde{\rho}$
is a hermitian 2-form of rank $n$ on $\C^n$ (the rank of $\rho$ is
at least $n$, as we will see in Proposition \ref{rank}).
\end{remark}

\begin{remark}\label{rem-5.5}
If we take $\psi_p = \psi$, for $p = 1,\ldots,P$, then the sum
defining $\Phi$ in (\ref{Phi}) has the same exponential function
and it reduces to one summand (this is the simplest case
considered in the first step of the proof). By the same reason we
see that there is no loss of generality to assume in (\ref{Phi})
that $\psi_p \neq \psi_q$, if $p\neq q$.
\end{remark}

\medskip
\noindent\textbf{ Step 3. Proof of the second part of Proposition
\ref{interior}.}

\noindent Consider a separable hermitian 2-form $\rho$ in the
interior of $\cC_{sep}\subset \cC$,
 where $\cC$ is  the real
vector space of hermitian 2-forms. It follows from Proposition
\ref{separ} that we can find $D=\dim \ \span\cC_{sep}=(nm)^2$ linearly
independent separable hermitian 2-forms $\rho^1,\ldots,\rho^D$,
considered as vectors in the space $\cC$ of hermitian forms, such
that $\rho = \sum_{d=1}^D \lambda^\rho_d \rho^d$ with
$\lambda^\rho_1 > 0, \ldots, \lambda^\rho_D > 0$.
Without loss of generality (changing slightly $\rho^d$, if
necessary) we can assume that
$$
\rho^d = \sum_{p=1}^{P_d} \conj{\big(\sigma^\pd\big)} \oh \sigma^\pd,
$$
where $P_d\geq 1$, $\sigma^\pd = \varphi^\pd \otimes \psi^\pd$,
with $\varphi^\pd\in\C^m$, $\psi^\pd\in(\C^n)^*$ and $\psi^\pc\neq
\psi^\qd$ for $(p,c)\neq (q,d)$. For each $d$ consider a mapping
$\Phi_d^\alpha:\C^n\to\C^m$ expressed by (\ref{Phi}) with
appropriate $\varphi$'s and $\psi$'s. Define the map
$\Phi^\alpha:\R_+^{D}\times\C^n\to\C^m$,
\begin{eqnarray*}
\Phi^\alpha(\lambda_1,\ldots,\lambda_D,z) = \sum_{d=1}^D
\sqrt{\lambda_d} \Phi_d^\alpha(z).
\end{eqnarray*}
 Take $\alpha=1/\beta^2$. We introduce the mapping
$\Upsilon:\R_+^D\times\R \to \cC\simeq\R^D$ defined by
\begin{eqnarray*}
\lefteqn{\Upsilon(\lambda_1,\ldots,\lambda_D,\beta) = } \\
& = & \left\{\begin{array}{ll}
\frac{1}{(2i)^n}\int_{\C^n}
 \conj{(D_{\conj{z}}\Phi^{\beta^{-2}}(\lambda,z))}
 \oh D_{\conj{z}}\Phi^{\beta^{-2}}(\lambda,z)
 \ d\conj{z}\wedge dz, & \beta \neq 0 \\
\rho, & \beta = 0
\end{array}\right. .
\end{eqnarray*}
From formula (\ref{formy}) obtained in Step 2 we know that
\begin{eqnarray*}
\Upsilon(\lambda,\beta) & = &
\sum_{d=1}^{D}\sum_{p=1}^{P_d} \lambda_d \conj{\big(\sigma^\pd\big)}
 \oh \sigma^\pd + R(\lambda,\beta) \\
& = & \sum_{d=1}^{D} \lambda_d \rho_d + R(\lambda,\beta),
\end{eqnarray*}
where $R:\R_+^D\times\R\to\cC$ is a differentiable mapping of
class $C^\infty$, given in coordinates by
\begin{eqnarray*}
\lefteqn{ R_{ijkl}(\lambda,\beta)  = \frac{1}{4}
 \sum_{d=1}^D \sum_{p=1}^{P_d} \lambda_d\conj{\big(\varphi_i^\pd\big)} \varphi_k^\pd \beta^2\delta_{jl} + } \\
&& + \frac{1}{4}\sum_{(p,c)\neq (q,d)} \sqrt{\lambda_c\lambda_d} \conj{\big(\varphi^\pc_i\big)} \varphi_k^\qd
\\
&& \qquad \times\Big(\conj{\big(3\psi_j^\pc - \psi_j^\qd\big)}\big(3\psi_l^\qd - \psi_l^\pc\big) +
{\beta^2\delta_{jl}}\Big) e^{-\frac{1}{\beta^2}|\psi^\pc - \psi^\qd|^2}.
\end{eqnarray*}

Since $\Upsilon(\lambda^\rho,0)=\rho$, we can complete the proof
by using the implicit function theorem. We only have to prove that
the rank of the differential of $\Upsilon$ with respect to
$\lambda$ is maximal  at $(\lambda^\rho,0)$. But this is trivial
since $R(\cdot,0)=0$ ($R(\cdot,\beta)$ converges locally uniformly
to zero function when $\beta$ tends to zero) and
$\rho^1,\ldots,\rho^D$ is a basis in $\cC\simeq R^D$.
 \endproof

\section{Construction of $\Phi$ in Theorem \ref{hermitowskie-t} (b)}

 Consider the family of mappings $\Phi:\C\T^n\to\C^m$ of the form
\begin{eqnarray}
\label{Tprzyblizanie} \Phi(z) = 2\sum_{p=1}^{P}
c_p^{-1}\varphi^p\, \chi_{a^p,b^p}(z),
\end{eqnarray}
where $z = (z_1,\ldots,z_n)$ with $z_j = x_j + iy_j \mod
2\pi(\Z+i\Z)$, $P\in\N$, and for all for $p=1,\ldots,P$ we take
$\varphi^p\in\C^m$, $(a^p,b^p)\in\Z^n\times\Z^n$, $c_p\in\N$ and
\begin{eqnarray*}
\chi_{a^p,b^p}(z) = (2\pi)^{-n}e^{i(\langle x, a^p\rangle +
\langle y, b^p\rangle)}.
\end{eqnarray*}
Without loosing generality we will assume that $(a^p,b^p)\ne
(a^q,b^q)$ if $p\neq q$. Then, with  $\psi^p = \frac{1}{c^p}(a^p +
ib^p)\in\Q^n+i\Q^n$ and $\varphi^p\in\C^m$, we have

\begin{lemma}
\label{Tlemat}
$$
\rho_\Phi = \sum_{p=1}^{P} \conj{(\varphi^p\ot \psi^p)}\oh
 (\varphi^p\ot \psi^p).
$$

\end{lemma}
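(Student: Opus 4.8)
The plan is to evaluate the torus analogue of (\ref{przedstawienie}),
\[
\rho_\Phi=\frac{1}{(2i)^n}\int_{\C\T^n}\conj{(D_{\conj{z}}\Phi(z))}\oh D_{\conj{z}}\Phi(z)\ d\conj{z}\wedge dz,
\]
by a direct computation, the point being that the only $z$-dependence is carried by the characters $\chi_{a^p,b^p}$, which are mutually orthogonal over $\C\T^n$. First I would compute the conjugate differential of $\Phi$. Since $\conj{\partial}_{z_j}=(\partial_{x_j}+i\partial_{y_j})/2$ and $\chi_{a^p,b^p}(z)=(2\pi)^{-n}e^{i(\langle x,a^p\rangle+\langle y,b^p\rangle)}$, one finds
\[
\conj{\partial}_{z_j}\chi_{a^p,b^p}(z)=\tfrac{i}{2}(a^p_j+ib^p_j)\,\chi_{a^p,b^p}(z)=\tfrac{i}{2}c_p\,\psi^p_j\,\chi_{a^p,b^p}(z),
\]
using $\psi^p=c_p^{-1}(a^p+ib^p)$. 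The prefactor $2$ and the $c_p^{-1}$ in $\Phi(z)=2\sum_p c_p^{-1}\varphi^p\chi_{a^p,b^p}(z)$ are then exactly absorbed, and writing $u^p=\varphi^p\ot\psi^p\in H^*$ we obtain
\[
D_{\conj{z}}\Phi(z)=i\sum_{p=1}^P u^p\,\chi_{a^p,b^p}(z)\in H^*.
\]

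Next I would expand the integrand as a hermitian form. Recalling that $\conj{A}\oh A$ acts by $(\xi,\eta)\mapsto\conj{A(\xi)}A(\eta)$, and that the scalar $i$ has modulus one so cancels, we get for every $\xi,\eta\in H$
\[
\big(\conj{(D_{\conj{z}}\Phi(z))}\oh D_{\conj{z}}\Phi(z)\big)(\xi,\eta)=\sum_{p,q}\conj{\chi_{a^p,b^p}(z)}\,\chi_{a^q,b^q}(z)\,\conj{u^p(\xi)}\,u^q(\eta).
\]
It then remains to integrate the scalar coefficients over $\C\T^n$, and this is the step on which the whole lemma turns. By the Parseval equality recorded in the proof of Theorem \ref{hermitowskie-t}(a)---under which $\chi_{a^p,b^p}$ has Fourier coefficient equal to the indicator of $(a^p,b^p)$---one gets
\[
\frac{1}{(2i)^n}\int_{\C\T^n}\conj{\chi_{a^p,b^p}(z)}\,\chi_{a^q,b^q}(z)\ d\conj{z}\wedge dz=\delta_{pq}.
\]
Here the standing assumption $(a^p,b^p)\neq(a^q,b^q)$ for $p\neq q$ is exactly what forces every off-diagonal integral to vanish.

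Substituting this back, all cross terms disappear and each diagonal term survives with coefficient $1$, so that
\[
\rho_\Phi(\xi,\eta)=\sum_{p=1}^P\conj{u^p(\xi)}\,u^p(\eta)=\sum_{p=1}^P\big(\conj{u^p}\oh u^p\big)(\xi,\eta),
\]
the last equality being the observation that the hermitian square $\conj{u^p}\oh u^p$ of a single functional coincides with its outer product $\conj{u^p(\cdot)}\,u^p(\cdot)$. This yields $\rho_\Phi=\sum_p\conj{(\varphi^p\ot\psi^p)}\oh(\varphi^p\ot\psi^p)$, as claimed. I expect no conceptual difficulty in the argument; the only genuine care needed is the bookkeeping of the normalizing constants (the $(2\pi)^{-n}$ in $\chi_{a^p,b^p}$, the prefactor $(2i)^{-n}$, and the measure $d\conj{z}\wedge dz=(2i)^n\prod_j dx_j\,dy_j$), so that the surviving diagonal coefficient comes out to exactly $1$---the distinctness hypothesis then disposes of everything else through character orthogonality.
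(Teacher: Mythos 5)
Your proof is correct and follows essentially the same route as the paper: compute $D_{\conj{z}}\Phi$ termwise, expand the hermitian square, and let orthonormality of the characters $\chi_{a^p,b^p}$ (guaranteed by the distinctness of the $(a^p,b^p)$) kill the cross terms, with the constants $2$, $c_p^{-1}$, and $(2i)^{-n}$ conspiring to leave coefficient exactly $1$ on each diagonal term. The only cosmetic difference is that you work coordinate-free and cite Parseval, whereas the paper verifies the same orthonormality integral directly and writes the result in coordinates $(\rho_\Phi)_{ijkl}$.
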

\begin{proof}
Clearly,
\begin{eqnarray*}
\cpar_j\Phi(z) = \sum_{p=1}^{P} ic_p^{-1}\varphi^p(a_j^p+ib_j^p)
\chi_{a^p,b^p}(z),
\end{eqnarray*}
where $\cpar_j = \frac{1}{2}(\partial_{x_j} + i \partial_{y_j})$.
Since
\begin{eqnarray*}
\frac{1}{(2i)^n}\int_{\C\T^n}
\conj{(\chi_{a^p,b^p}(z))}{\chi_{a^q,b^q}(z)} \ dz\wedge d\conj{z}
= \left\{\begin{array}{ll}
1, & \mathrm{if}\ p=q, \\
0, & \text{otherwise},
\end{array}\right.
\end{eqnarray*}
we find that
\begin{eqnarray*}
(\rho_\Phi)_{ijkl} & = & \frac{1}{(2i)^n}\int_{\C\T^n} \conj{(\cpar_j\Phi_i(z))}\cpar_l\Phi_k(z) \ d\conj{z}\wedge dz = \\
& = & \sum_{p=1}^{P}
c_p^{-2}\conj{(\varphi_i^p)}\conj{(a_j^p+ib_j^p)}
\varphi^p_k(a_l^p+ib_l^p).
\end{eqnarray*}
Without coordinates one can write it as
\begin{eqnarray}
\label{Togolny} \rho_\Phi = \sum_{p=1}^{P} \conj{(\varphi^p\ot
\psi^p)}\oh (\varphi^p\ot \psi^p),
\end{eqnarray}
where $\psi^p = \frac{1}{c^p}(a^p + ib^p)$.
\end{proof}

\begin{remark}
Note that the set of forms
$$
\sum_{p=1}^{P} \conj{(\varphi^p\ot \psi^p)}\oh (\varphi^p\ot
\psi^p),
$$
with arbitrary $P\geq 1$, $\varphi^p\in\C^m$, and $\psi^p\in(\Q +
i\Q)^n$, is convex.
\end{remark}

 \noindent {\it Proof of Theorem \ref{hermitowskie-t} (b).}
It follows from the lemma that every separable hermitian 2-form
can be arbitrarily closely approximated by the forms $\rho_\Phi$,
since the set of $\psi^p$ of the above form is dense in $\C^n$.
This, together with the fact that the set of such forms is convex,
implies that every hermitian 2-form in the interior of $\cC_{sep}$
is integrally representable.
\endproof

\begin{remark}
Assume that we allow infinite summation in (\ref{Tprzyblizanie})
with additional requirements that
\begin{eqnarray}
\label{Tszeregi} \sum_{p=1}^{\infty} |c_p^{-1}\varphi^p|^2 <
\infty,\qquad \sum_{p=1}^{\infty}
|c_p^{-1}\varphi^p(a_j^p+ib_j^p)|^2 < \infty,
\end{eqnarray}
for $j=1,\dots,P$. Then it is obvious that (\ref{Tprzyblizanie})
can be considered as a Fourier series, and every square integrable
mapping from $\C\T^n$ to $\C^m$ with square integrable
differentials is of this form. Thus the set of separable hermitian
2-forms which are integrally representable on a torus is the
closure of the set of the 2-forms $\rho_\Phi$ given by
(\ref{Togolny}), in the  topology given by the pseudonorms in
(\ref{Tszeregi}). In particular, we have the following result.
\end{remark}

\begin{proposition}\label{product-form}
A product form $\rho=\conj{\sigma}\oh \sigma$ with
$\sigma=\varphi\ot \psi\ne0$ is integrally representable on
$\C\T^n$ iff the coordinates $\psi_1,\dots,\psi_n$ of $\psi$ are
commensurable, i.e., there exist integers
$a_1,\dots,a_n,b_1,\dots,b_n$ and a complex number $w\in\C$ such
that $\psi_j=w(a_j + ib_j)$, for all $j$.
\end{proposition}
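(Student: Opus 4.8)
The plan is to read off the required characterization directly from the Fourier-series description of integrally representable forms on the torus furnished by Lemma \ref{Tlemat} and the preceding remark. Recall that $\rho$ is integrally representable on $\C\T^n$ precisely when $\rho=\rho_\Phi$ for a square integrable $\Phi:\C\T^n\to\C^m$ with square integrable conjugate differential; expanding such a $\Phi$ in its Fourier series of the form (\ref{Tprzyblizanie}) with coefficients satisfying (\ref{Tszeregi}) and applying the computation of Lemma \ref{Tlemat}, one obtains
$$
\rho_\Phi=\sum_{(a,b)\in\Z^n\times\Z^n}\conj{\big(d_{(a,b)}\ot(a+ib)\big)}\oh\big(d_{(a,b)}\ot(a+ib)\big),
$$
a (possibly infinite) sum of product forms (cf. (\ref{Togolny})) in which the second tensor factor of every summand is the Gaussian-integer vector $a+ib$, the $d_{(a,b)}\in\C^m$ being proportional to the Fourier coefficients of $\Phi$. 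This is the only structural fact I will use.

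For the easy implication, suppose $\psi$ is commensurable, say $\psi_j=w(a_j+ib_j)$ with $a,b\in\Z^n$ and $w\in\C$. Since $\sigma=\varphi\ot\psi\ne0$ we have $\psi\ne0$, so $(a,b)\ne(0,0)$ and $w\ne0$. I then take the single Fourier mode $(a,b)$ in (\ref{Tprzyblizanie}), with $P=1$, $c_1=1$ and coefficient $w\varphi$ in place of $\varphi^1$. By Lemma \ref{Tlemat} the resulting form is $\conj{(w\varphi\ot(a+ib))}\oh(w\varphi\ot(a+ib))$, and since $w\varphi\ot(a+ib)=\varphi\ot w(a+ib)=\varphi\ot\psi=\sigma$, this equals exactly $\rho=\conj\sigma\oh\sigma$. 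Thus a commensurable product form is integrally represented by a single Fourier mode.

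For the converse, suppose $\rho=\conj\sigma\oh\sigma$ is integrally representable and write it as the sum above, abbreviating $s_{(a,b)}=d_{(a,b)}\ot(a+ib)$. The key point is that $\conj\sigma\oh\sigma$ is a rank-one (product) form while every summand $\conj{s_{(a,b)}}\oh s_{(a,b)}$ is positive semi-definite. Evaluating on the diagonal gives $\sum_{(a,b)}|s_{(a,b)}(u)|^2=\rho(u,u)=|\sigma(u)|^2$ for every $u\in H$; in particular, for $u\in\ker\sigma$ every nonnegative term must vanish, so each $s_{(a,b)}$ annihilates the hyperplane $\ker\sigma$. A nonzero linear functional with the same kernel as $\sigma$ is a scalar multiple of $\sigma$, hence $s_{(a,b)}=\lambda_{(a,b)}\sigma$ for every mode, with $\lambda_{(a,b)}\ne0$ exactly when $s_{(a,b)}\ne0$. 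Since $\sigma\ne0$ at least one nonzero mode occurs, necessarily with $(a,b)\ne(0,0)$; for it the rank-one tensor identity $d_{(a,b)}\ot(a+ib)=\lambda_{(a,b)}\,\varphi\ot\psi$ forces $a+ib$ to be a complex multiple of $\psi$, i.e. $\psi=w(a+ib)$ with $a,b\in\Z^n$. This is exactly commensurability.

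The only delicate point is the passage through the infinite Fourier sum, and I expect the main obstacle to be justifying the term-by-term diagonal identity $\rho_\Phi(u,u)=\sum_{(a,b)}|s_{(a,b)}(u)|^2$ together with the vanishing of individual terms on $\ker\sigma$. This rests on the Parseval identity recorded in the torus case and on the nonnegativity of each summand, which lets me conclude that a vanishing total forces each term to vanish even when the sum is infinite. Once this is secured, the remaining algebra, namely equal kernels $\Rightarrow$ proportional functionals and the uniqueness of rank-one tensor factorizations, is routine.
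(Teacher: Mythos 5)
Your proof is correct and takes essentially the same route as the paper's: the ``if'' direction via Lemma \ref{Tlemat} with $P=1$ (absorbing $w$ into the $\C^m$-factor), and the converse by passing to the infinite Fourier-series representation (\ref{Togolny}) and using that the rank-one form $\conj{\sigma}\oh\sigma$, written as a sum of positive semi-definite product terms, forces every term to be proportional to it, whence $\psi$ is a complex multiple of a Gaussian-integer vector. The only difference is that you spell out the proportionality step (each summand must vanish on $\ker\sigma$, hence is a scalar multiple of $\sigma$), which the paper merely asserts from the rank-one hypothesis.
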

\begin{proof}
The "if" part follows from Lemma \ref{Tlemat} with $P=1$. To prove
the converse, assume that a product hermitian 2-form $\rho$ is
integrally representable. Then, by the above remark, it has a
representation (\ref{Togolny}) with infinite sum and the
additional requirements (\ref{Tszeregi}). Since $\rho$ has rank
one, there exist  a product form $\varphi\ot\psi \in
\C^m\ot(\C^n)^*$ and complex numbers $w_1, w_2, \ldots$ such that
\begin{eqnarray*}
\varphi^p\ot \psi^p=w_p\,\varphi\ot\psi, \quad p=1,2,\ldots.
\end{eqnarray*}
Then, if $w_p\ne0$, we can write $\varphi^p = u_p\varphi$,
$\psi^p=v_p\psi$, with nonzero complex numbers $u_p,v_p$. Since
$\psi^p =c_p^{-1} (a^p+ib^p)$, we get $\psi=w(a^p+ib^p)$, where
$w=v_p^{-1} c_p^{-1}$.
\end{proof}

\section{Integral representation condition}\label{konieczne}

As we have seen in Proposition \ref{product-form}, there are
product forms that can be integrally represented on the torus. For
the complex linear space this is not the case, as we will see in
the following theorem.

\begin{theorem}
\label{hermitowskie2} (a) No product hermitian 2-form can be
expressed in the integral form (\ref{przedstawienie}) over $\C^n$.

\noindent (b) Furthermore, if a hermitian 2-form $\rho$ is
representable in the integral form~(\ref{przedstawienie}) over
$\C^n$, then it satisfies the following condition.
\begin{itemize}
\item[(IRC)] If there exist $\tphi_0\in(\C^m)^*,
\tpsi_0\in\C^n\setminus \{0\}$ such that
$\rho(\tphi_0\otimes\tpsi_0,\tphi_0\otimes\tpsi_0) = 0$, then for
all $\tpsi\in\C^n$ we have
$\rho(\tphi_0\otimes\tpsi,\tphi_0\otimes\tpsi) = 0$.
\end{itemize}
\end{theorem}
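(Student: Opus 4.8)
The plan is to establish part (b) first and then deduce part (a) from it, since any form of the shape (\ref{przedstawienie}) must satisfy (IRC). The starting point is a clean evaluation of $\rho_\Phi$ on the product vectors occurring in (IRC). Fix $\tphi_0\in(\C^m)^*$ and $\tpsi\in\C^n$, write $A(z)=D_{\conj z}\Phi(z)$, and introduce the scalar function $g=\sum_{i=1}^m(\tphi_0)_i\Phi_i:\C^n\to\C$. Using the bilinear formula for $A\od A$ (and the identification $(\C^m)^*\otimes\C^n\cong H$) one gets $(A(z)\od A(z))(\tphi_0\otimes\tpsi,\tphi_0\otimes\tpsi)=|\tphi_0(A(z)\tpsi)|^2$, and since $(A(z)\tpsi)_i=\sum_j\cpar_j\Phi_i(z)\,\tpsi_j$ this equals $|\cpar_\tpsi g(z)|^2$, where $\cpar_\tpsi=\sum_j\tpsi_j\cpar_j$ is the conjugate derivative of $g$ in the direction $\tpsi$. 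As $\frac{1}{(2i)^n}d\conj z\wedge dz$ is the positive Lebesgue measure $d\mu$, this yields
\[
\rho_\Phi(\tphi_0\otimes\tpsi,\tphi_0\otimes\tpsi)=\int_{\C^n}\big|\cpar_\tpsi g(z)\big|^2\,d\mu(z)\ge 0 .
\]
Thus the whole theorem is reduced to a statement about the single function $g$: the hypothesis of (IRC) asserts $\cpar_{\tpsi_0}g=0$ a.e.\ for some $\tpsi_0\ne0$, and its conclusion asserts $\cpar_\tpsi g=0$ a.e.\ for every $\tpsi$.

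The heart of the matter is the following rigidity statement, which I would isolate as a lemma: \emph{if $g\in L^2(\C^n)$ has all $\cpar_jg\in L^2(\C^n)$ and $\cpar_{\tpsi_0}g=0$ a.e.\ for some $\tpsi_0\ne0$, then $g=0$ a.e.} I would prove it with the Fourier transform already introduced in Section~4. Applying the one-variable rule (\ref{a}) in each coordinate gives $\widehat{\cpar_jg}(\kappa)=\tfrac{i}{2}\kappa_j\hat g(\kappa)$, hence $\widehat{\cpar_{\tpsi_0}g}(\kappa)=\tfrac{i}{2}P(\kappa)\hat g(\kappa)$ with $P(\kappa)=\sum_j(\tpsi_0)_j\kappa_j$ a \emph{nonzero} $\C$-linear functional of $\kappa\in\C^n$. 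Since $\cpar_{\tpsi_0}g\in L^2$ vanishes, the $L^2$ Fourier theory forces $P(\kappa)\hat g(\kappa)=0$ for a.e.\ $\kappa$. But the zero set $\{P=0\}$ is a complex hyperplane, of real codimension $2$ and thus of Lebesgue measure zero in $\C^n\simeq\R^{2n}$; therefore $\hat g=0$ a.e., and so $g=0$. This is exactly the step I expect to be the crucial (and slightly surprising) one: vanishing of a \emph{single} directional conjugate derivative of an $L^2$ function forces the function to vanish \emph{identically}, precisely because a nonzero complex-linear symbol cuts out only a null set. An alternative route, avoiding Fourier analysis, would observe that $\cpar_{\tpsi_0}g=0$ makes $g$ weakly holomorphic along almost every complex line in the direction $\tpsi_0$ and invoke that the only entire function in $L^2(\C)$ is $0$; I prefer the Fourier version, as the relevant machinery is already in place.

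Granting the lemma, part (b) is immediate: if $\rho=\rho_\Phi$ satisfies the hypothesis of (IRC) with some $\tpsi_0\ne0$, then the displayed identity gives $\int_{\C^n}|\cpar_{\tpsi_0}g|^2\,d\mu=0$, so $\cpar_{\tpsi_0}g=0$ a.e.; the lemma yields $g=0$, whence $\int_{\C^n}|\cpar_\tpsi g|^2\,d\mu=0$ for every $\tpsi$, which is the conclusion of (IRC). For part (a) I would argue by contradiction and exhibit a violation of (IRC). Let $\rho=\conj{\sigma}\oh\sigma$ be a nonzero product form, $\sigma=\varphi\otimes\psi$ with $\varphi\in(\C^n)^*\setminus\{0\}$ and $\psi\in\C^m\setminus\{0\}$; evaluating as above gives $\rho(\tphi_0\otimes\tpsi,\tphi_0\otimes\tpsi)=|\varphi(\tpsi)|^2\,|\tphi_0(\psi)|^2$. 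Choosing $\tphi_0$ with $\tphi_0(\psi)\ne0$ and $\tpsi_0\in\ker\varphi\setminus\{0\}$ (possible since $\varphi\ne0$ makes $\dim\ker\varphi=n-1\ge1$), we get $\rho(\tphi_0\otimes\tpsi_0,\tphi_0\otimes\tpsi_0)=0$ while $\rho(\tphi_0\otimes\tpsi,\tphi_0\otimes\tpsi)\ne0$ for every $\tpsi\notin\ker\varphi$, so $\rho$ fails (IRC) and therefore, by part (b), cannot be written in the form (\ref{przedstawienie}). Note that this choice of $\tpsi_0$ uses $n\ge2$; this is consistent, since for $n=1$ a map $\Phi=\varphi\,\Theta$ with a scalar function $\Theta$ produces a rank-one $\rho_\Phi$, i.e.\ an integrally representable product form.
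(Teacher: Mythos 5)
Your proof is correct, but it follows a genuinely different route from the paper's. The paper proves (a) and (b) as two separate, parallel arguments, and its rigidity step is complex-analytic rather than Fourier-analytic: after a linear change of coordinates putting $\tpsi_0=\tie_1$, it deduces from $\cpar_1\Phi_1=0$ (citing a remark to Theorem 4.6.10 in Krantz) that $\Phi_1$ is holomorphic in $z_1$ along almost every complex line, and then combines the Fubini theorem with the fact that the only holomorphic function in $L^2(\C)$ (resp.\ in $L^2(\C^{n-1})$, for part (a)) is the zero function --- i.e.\ exactly the ``alternative route'' you mention in passing is the paper's actual proof. Your Fourier lemma establishes the same rigidity statement --- vanishing of a single directional conjugate derivative of an $L^2$ function forces the function to vanish --- by observing that the symbol $P(\kappa)=\sum_j(\tpsi_0)_j\kappa_j$ is a nonzero complex-linear functional whose zero set has real codimension two, hence Lebesgue measure zero, so $P\hat g=0$ a.e.\ forces $\hat g=0$; this reuses the machinery already set up in Section 4 for Theorem \ref{hermitowskie}(a), needs no coordinate change and no external several-complex-variables references, at the mild cost of invoking the identity $\widehat{\cpar_j g}=\tfrac{i}{2}\kappa_j\hat g$ for weak derivatives in the $L^2$/distributional sense. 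Your second structural difference is deducing (a) from (b) by exhibiting a violation of (IRC) for any nonzero product form, where the paper reruns the holomorphy argument from scratch for (a); this is shorter and, importantly, it surfaces a hypothesis the paper leaves implicit: your choice $\tpsi_0\in\ker\varphi\setminus\{0\}$ requires $n\ge2$, and indeed for $n=1$ statement (a) is false (your $\Phi=\varphi\,\Theta$, or Step 1 of Section 5 with $n=1$ and the Gaussian $f_\psi$, represents a nonzero product form integrally), while the paper's own argument silently breaks down there, since ``the only holomorphic square integrable function on $\C^{n-1}$ is the null function'' fails when $n-1=0$. So your version is not only valid but slightly more careful than the paper about the range of validity of part (a).
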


\noindent We will call (IRC) \emph{the integral representation
condition} (necessary for integral representability of $\rho$).

 \begin{proof} \noindent
(a) It is enough to prove this statement for the hermitian 2-form
$\rho_0 = \conj{\sigma}\oh\sigma$ with $\sigma = \f_1\otimes\e_1$,
where $\f_1,\ldots,\f_m$ is the standard basis in $\C^m$ and
$\e_1,\ldots,\e_n$ is the dual basis to the standard basis in
$\C^n$. The general case reduces to this one by linear changes of
coordinates in $\C^m$ and $\C^n$.

We will show that there is no map $\Phi$ such that $\rho_\Phi =
\conj{(\f_1\otimes\e_1)}\oh(\f_1\otimes\e_1)$. Assume that there
is such a map. Then from the definition of $\rho_\Phi$ we have
\begin{eqnarray*}
\delta_{i1}\delta_{k1} & = & (\rho_\Phi)_{i1k1} =
\frac{1}{(2i)^n}\int_{\C^n}
\conj{(\cpar_1\Phi_i(z))} \cpar_1\Phi_k(z) \ d\conj{z}\wedge dz, \\
0 & = & (\rho_\Phi)_{ijkl}  =  \frac{1}{(2i)^n}\int_{\C^n}
\conj{(\cpar_j\Phi_i(z))} \cpar_l\Phi_k(z) \ d\conj{z}\wedge dz
 \qquad \forall_{j\neq 1 \ \text{or} \ l\neq 1}
\end{eqnarray*}
For $i=k$ and $j=l$, the above equations imply that for any
$i=1,\ldots,m$
\begin{eqnarray*}
\| \cpar_1\Phi_i \|_{L^2} & = & \delta_{1i},  \\
\| \cpar_j\Phi_i \|_{L^2} & = & 0,  \qquad\ \, j=2,\ldots,n,
\end{eqnarray*}
where $\| \cdot \|_{L^2}$ denotes the standard $L^2$ norm. From
these expressions we deduce that for $i\neq 1$
$$
\cpar_j\Phi_i = 0 \quad \text{a.e.,}\qquad j=1,\ldots,n.
$$
Now from a remark to Theorem 4.6.10 in \cite{kranz} it follows
that, for  $i\neq 1$, the maps $\Phi_i:\C^n\to\C$ are holomorphic
on $\C^n$,  and since we assume that they are square integrable,
we have
$$
\Phi_i = 0 \qquad \forall\ i\neq 1.
$$
For $i=1$ we have
$$
\cpar_j\Phi_1 = 0 \quad \text{a.e.}\qquad \forall\ j\neq 1.
$$
Again from the remark mentioned above we obtain that for all $z_1\in\C$
the function
$$
\Phi_{z_1} := \Phi_1(z_1,\cdot,\ldots,\cdot):\C^{n-1}\to\C
$$
is holomorphic on $\C^{n-1}$.
Trying to find any nontrivial square integrable function $\Phi_1$
to our problem, take any square integrable function $g:\C^n\to\C$
which fulfils compatibility conditions $\cpar_j g=0$ for all
$j\neq 1$. Then by Theorem 4.6.11 in \cite{kranz}, there exists
locally square integrable solution to the equations
\begin{eqnarray}
\label{diff}
\cpar_j\Phi_1 = \delta_{1j}g, \qquad j=1,\ldots,n.
\end{eqnarray}
If, additionally, for some $g$ the solution $\Phi_1$ is square
integrable then, by the Fubini theorem, $\Phi_{z_1}$ must be
square integrable function for almost all $z_1$. This means that
$$
\Phi_{z_1} = 0, \qquad \text{for almost all }z_1,
$$
as the null function is the only holomorphic square integrable
function on $\C^{n-1}$. In consequence,
$$
\| \Phi_1 \|_{L^2}  =  0
$$
which, together with $\Phi_i=0$ for $i\neq 1$, contradicts the
inequality $\rho_\Phi\neq 0$.

\medskip

\noindent (b) Just like in the proof of (a), we can assume that
$\tphi_0=\tf_1\in(\C^m)^*$ and $\tpsi_0=\tie_1\in\C^n$ , where
$\tf_1,\ldots,\tf_m$ is the dual basis to the standard basis in
$\C^m$ and $\tie_1,\ldots,\tie_n$ is the standard basis in $\C^n$
(the case of $\tphi=0$ is trivial). By the assumption, there
exists a square integrable map $\Phi:\C^n\rightarrow\C^m$, such
that $\rho = \rho_\Phi$, in particular,
\begin{eqnarray*}
0 = \rho(\tphi_0\otimes\tpsi_0,\tphi_0\otimes\tpsi_0) & = &
 \frac{1}{(2i)^n}\int_{\C^n} |\cpar_1\Phi_1(z)|^2 \ d\conj{z}\wedge dz.
\end{eqnarray*}
Thus $\cpar_1\Phi_1 = 0$ and, by arguments as above, we deduce
that $\Phi_1(z_1,z_2,\ldots,z_n)$ is a holomorphic function of
$z_1$, for almost all $(z_2,\ldots,z_n)\in\C^{n-1}$. A holomorphic
function in $L^2(\C)$ must be identically zero, thus $\Phi_1=0$.
Therefore, for all $\tpsi\in\C^n$ we have
\begin{eqnarray*}
\rho(\tphi_0\otimes\tpsi,\tphi_0\otimes\tpsi) & = &
 \sum_{ijkl} \delta_{1i}\conj{\tpsi}_j\delta_{1k}\tpsi_l \frac{1}{(2i)^n}\int_{\C^n}\conj{(\cpar_j\Phi_i(z))}\cpar_l\Phi_k(z) \ d\conj{z}\wedge dz \\
& = &
\sum_{jl} \conj{\tpsi}_j\tpsi_l \frac{1}{(2i)^n}\int_{\C^n}\conj{(\cpar_j\Phi_1(z))}\cpar_l\Phi_1(z) \ d\conj{z}\wedge dz = 0,
\end{eqnarray*}
which is our assertion.
\end{proof}

 We will now analyze the integral representation condition.
Consider a tensor product $H = K \otimes L$, where $K$ and $L$ are
vector spaces over $\C$ of finite dimensions $m$ and $n$,
respectively. Let $\rho : H \times H \rightarrow \C$ be a nonzero
separable hermitian 2-form. We denote by $\q{\rho} : H \to \R$ the
quadratic form associated with $\rho$, i.e.
$$
\q{\rho}(v) = \rho(v,v).
$$
Define two sets $\Ker \subset K$ and $\keR \subset L$ by
\begin{eqnarray*}
\Ker & := & \{\tphi\in K \ | \ \forall \tpsi \in L \quad  \q{\rho}(\tphi\ot\tpsi)=0 \},\\
\keR & := & \{\tpsi\in L \ | \ \forall \tphi \in L \quad  \q{\rho}(\tphi\ot\tpsi)=0 \}.
\end{eqnarray*}
Let
$$
\rho = \sum_{p=1}^{P} \conj{(\sigma^p)} \oh \sigma^p,
$$
where $P\geq 1$ and $\sigma^p \in H^*$ are linear functionals $
\sigma^p = \varphi^p \otimes \psi^p$, with $\varphi^p \in K^*$ and
$\psi^p \in L^*$. Then $\Ker$ is the intersection of the kernels
of $\varphi^p$'s, and $\keR$ is the intersection of the kernels of
$\psi^p$'s i.e.
$$
\Ker = \bigcap_p \ker{\varphi^p} \subset K \ , \qquad
\keR = \bigcap_p \ker{\psi^p} \subset L.
$$
Indeed, if $\tphi$ belongs to the intersection of the kernels of
all $\varphi^p$'s, then it is obvious that $\tphi\in \Ker$. Vice
versa, if there exists $p_0$ such that $\varphi^{p_0}(\tphi) \neq
0$, then $\q{\rho}(\tphi\ot\tpsi) > 0$ for all $\tpsi \notin \ker
\psi^{p_0}$, as all terms $\conj{\sigma^p} \oh \sigma^p$ defining
$\q{\rho}$ are nonnegative quadratic forms. The proof of the
second equality is analogous. From these equalities it follows
that $\Ker \subset K$ and $ \keR \subset L$ are linear subspaces
of nonzero codimensions (since $\rho\ne0$).

Let $\tphi = \tphi_1 + \tphi_k \in K$, where $\tphi_k\in\Ker$.
Then
$$
\q{\rho}(\tphi\otimes\tpsi) = \q{\rho}(\tphi_1\otimes\tpsi).
$$
Thus, for any vector $\cphi \in K/\Ker$ the quadratic form
$\q{\rho}_{\cphi} : L \to \R$ given by
$$
\q{\rho}_{\cphi}(\tpsi) = \q{\rho}(\tphi\otimes\tpsi)
$$
is well defined.

\begin{proposition}
\label{IRC}
For any separable hermitian 2-form $\rho$ the following conditions are equivalent.
\begin{enumerate}
    \item \label{pierwszy} (IRC) If there exist $\tphi_0\in K$ and $\tpsi_0\in L\backslash \{0\}$ such that $\q{\rho}(\tphi_0\otimes\tpsi_0) = 0$, then for all $\tpsi\in L$ we have $\q{\rho}(\tphi_0\otimes\tpsi) = 0$.
    \item \label{drugi} If $\tphi\in K$, $\tpsi\in L\setminus\{0\}$ and $\q{\rho}(\tphi\otimes\tpsi) = 0$ then $\tphi\in\Ker$.
    \item \label{trzeci} For all $0\neq\cphi \in K/\Ker$ the quadratic form $\q{\rho}_{\cphi}$ is strictly positive definite.
\end{enumerate}
\end{proposition}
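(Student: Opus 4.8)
The plan is to derive all three equivalences from a single explicit formula for $\q{\rho}$ on decomposable tensors. First I would record that, since $(\conj{\sigma}\oh\sigma)(v,v)=|\sigma(v)|^2$ and $\sigma^p(\tphi\otimes\tpsi)=\varphi^p(\tphi)\,\psi^p(\tpsi)$, the associated quadratic form on a product vector equals
\[
\q{\rho}(\tphi\otimes\tpsi)=\sum_{p=1}^{P}|\varphi^p(\tphi)|^2\,|\psi^p(\tpsi)|^2 ,
\]
which is a sum of nonnegative terms. This has two immediate consequences that are the only real ingredients: each $\q{\rho}_{\cphi}$ is positive semi-definite, so that condition \ref{trzeci} is equivalent to saying $\q{\rho}_{\cphi}$ has no nonzero isotropic vector when $\cphi\neq0$; and, combined with the already-established identity $\Ker=\bigcap_p\ker\varphi^p$, the defining condition $\tphi\in\Ker$ is literally the statement $\q{\rho}(\tphi\otimes\tpsi)=0$ for all $\tpsi\in L$. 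Everything else is unwinding definitions.

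For \ref{pierwszy}$\Leftrightarrow$\ref{drugi} I would argue directly. Assuming \ref{pierwszy}, take $\tphi,\tpsi$ with $\tpsi\neq0$ and $\q{\rho}(\tphi\otimes\tpsi)=0$; applying the hypothesis with $\tphi_0=\tphi$, $\tpsi_0=\tpsi$ gives $\q{\rho}(\tphi\otimes\tpsi')=0$ for every $\tpsi'\in L$, which is exactly the defining condition for $\tphi\in\Ker$, so \ref{drugi} holds. Conversely, assuming \ref{drugi}, if $\q{\rho}(\tphi_0\otimes\tpsi_0)=0$ for some $\tpsi_0\neq0$ then $\tphi_0\in\Ker$ by \ref{drugi}, and the definition of $\Ker$ forces $\q{\rho}(\tphi_0\otimes\tpsi)=0$ for all $\tpsi$, which is \ref{pierwszy}.

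For \ref{drugi}$\Leftrightarrow$\ref{trzeci} I would pass to the quotient $K/\Ker$, recalling that $\q{\rho}_{\cphi}$ is well defined there. Assuming \ref{drugi}, fix $0\neq\cphi\in K/\Ker$ with a representative $\tphi\notin\Ker$; if $\q{\rho}_{\cphi}(\tpsi)=\q{\rho}(\tphi\otimes\tpsi)=0$ for some $\tpsi\neq0$, then \ref{drugi} forces $\tphi\in\Ker$, a contradiction, so $\q{\rho}_{\cphi}$ is strictly positive definite and \ref{trzeci} holds. Conversely, assuming \ref{trzeci}, suppose $\q{\rho}(\tphi\otimes\tpsi)=0$ with $\tpsi\neq0$; if $\tphi\notin\Ker$ then $\cphi\neq0$ and $\q{\rho}_{\cphi}(\tpsi)=0$ with $\tpsi\neq0$ contradicts strict positivity, so $\tphi\in\Ker$ and \ref{drugi} holds.

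I do not expect a genuine obstacle here: all three conditions are reformulations of the vanishing locus of the nonnegative form $\q{\rho}$ on product vectors, organized around the subspace $\Ker$. The only point requiring care is bookkeeping of the nonvanishing hypotheses $\tpsi\neq0$ and $\cphi\neq0$, which is precisely what upgrades the automatic positive \emph{semi}-definiteness of $\q{\rho}_{\cphi}$ to strict positive definiteness on the quotient, once the degenerate directions coming from $\Ker$ have been divided out.
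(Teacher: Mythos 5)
Your proposal is correct and follows essentially the same route as the paper: both arguments are definitional unwinding around $\Ker$ and the quotient form $\q{\rho}_{\cphi}$, with the key point that separability makes $\q{\rho}_{\cphi}$ positive semi-definite, so failure of strict positive definiteness yields a nonzero null vector (the paper proves the cycle \ref{pierwszy}$\Rightarrow$\ref{drugi}$\Rightarrow$\ref{trzeci}$\Rightarrow$\ref{pierwszy}, while you prove the two biconditionals \ref{pierwszy}$\Leftrightarrow$\ref{drugi} and \ref{drugi}$\Leftrightarrow$\ref{trzeci}, a purely organizational difference). Your only stylistic divergence is making explicit the semi-definiteness observation that the paper uses implicitly, which is a small gain in rigor but not a different method.
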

\begin{proof}
(\ref{pierwszy} $\Rightarrow$ \ref{drugi}) This implication is obvious.
\\
(\ref{drugi} $\Rightarrow$ \ref{trzeci}) Assume that there exists $0\neq\cphi \in K/\Ker$ such that $\q{\rho}_{\cphi}$ is not strictly positive definite. Then there exist $\tpsi\in L\setminus\{0\}$ such that $\q{\rho}_{\cphi}(\tpsi)=0$. From the definition of $\q{\rho}_{\cphi}$ and condition \ref{drugi} we have that $\tphi\in \Ker$ and therefore $\cphi = 0$.
\\
(\ref{trzeci} $\Rightarrow$ \ref{pierwszy}) Assume that there exists $\tphi\in K, \tpsi\in L\backslash \{0\}$ such that $\q{\rho}(\tphi\otimes\tpsi) = 0$. Then condition \ref{trzeci} implies that $\cphi=0\in K/\Ker$ i.e. $\tphi\in\Ker$ and therefore for all $\tpsi\in L$ we have $\q{\rho}(\tphi\otimes\tpsi) = 0$.
\end{proof}

\begin{corollary}
If the integral representation condition holds for a nonzero
separable hermitian 2-form $\rho$ then $\keR = 0$. In particular,
if $\rho$ is integrally representable on $\C^n$, then $\keR = 0$.
\end{corollary}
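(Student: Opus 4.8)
The plan is to argue by contradiction, combining the explicit algebraic description of $\keR$ with the equivalent reformulations of the integral representation condition supplied by Proposition \ref{IRC}. Suppose, contrary to the claim, that $\keR \ne 0$, and fix a nonzero vector $\tpsi_0 \in \keR$. By the very definition of $\keR$, this means $\q{\rho}(\tphi \ot \tpsi_0) = 0$ for \emph{every} $\tphi \in K$; that is, a single nonzero $\tpsi_0$ makes the associated quadratic form vanish against all of $K$ at once.

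The next step is to feed this into the integral representation condition. I would use its equivalent form (\ref{drugi}) from Proposition \ref{IRC}: whenever $\tphi \in K$, $\tpsi \in L \setminus \{0\}$, and $\q{\rho}(\tphi \ot \tpsi) = 0$, one must have $\tphi \in \Ker$. Applying this with $\tpsi = \tpsi_0 \ne 0$ and with $\tphi$ ranging over all of $K$, I conclude that every $\tphi \in K$ lies in $\Ker$, i.e. $\Ker = K$.

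This is the contradiction I seek. In the discussion preceding Proposition \ref{IRC} it was established that $\Ker = \bigcap_p \ker \varphi^p$ is a subspace of $K$ of \emph{nonzero} codimension, precisely because $\rho \ne 0$. Since $\Ker = K$ is incompatible with this, the assumption $\keR \ne 0$ is untenable, and therefore $\keR = 0$. The final ``in particular'' clause then follows immediately: by Theorem \ref{hermitowskie2}(b), integral representability of $\rho$ on $\C^n$ forces the integral representation condition to hold, so the argument above applies verbatim and yields $\keR = 0$.

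I do not expect any genuine obstacle; the corollary is a short logical deduction rather than a computation. The only point demanding care is the handling of quantifiers: one must notice that membership of $\tpsi_0$ in $\keR$ delivers the vanishing $\q{\rho}(\tphi \ot \tpsi_0) = 0$ \emph{simultaneously} for all $\tphi \in K$, which is exactly the hypothesis needed to run condition (\ref{drugi}) for each $\tphi$ separately and thereby force $\Ker = K$.
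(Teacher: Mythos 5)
Your proof is correct and is essentially the paper's own argument: both proceed by contradiction using condition (2) of Proposition \ref{IRC} together with the fact (noted before that proposition) that $\Ker$ has nonzero codimension since $\rho\ne0$, and both obtain the ``in particular'' clause from Theorem \ref{hermitowskie2}. The only difference is cosmetic — the paper fixes one $\tphi\notin\Ker$ and contradicts condition (2) directly, while you let $\tphi$ range over $K$ to force $\Ker=K$ before invoking the codimension fact.
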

\begin{proof}
Assume that $\keR \neq \{0\}$, and take any $0\neq\tphi \in K
\setminus \Ker$ and $\tpsi \in \keR\setminus\{0\}$. Then
$\q{\rho}(\tphi\otimes\tpsi) = 0$, which contradicts condition
\ref{drugi} in Proposition \ref{IRC}. The second statement follows
from Theorem \ref{hermitowskie2}.
\end{proof}

\begin{remark}
Note that the negation of (IRC) gives an almost sufficient
condition for entanglement of hermitian 2-forms. Namely, if a
positive definite hermitian 2-form does not satisfy (IRC), then it
belongs to the boundary of the cone of separable 2-forms, or it is
entangled, by Theorem \ref{hermitowskie} (b) and Theorem
\ref{hermitowskie2}.
\end{remark}

\begin{proposition}\label{rank}
If a nonzero hermitian 2-form $\rho$ is integrally representable
then it has rank at least $n=\dim{L}$.
\end{proposition}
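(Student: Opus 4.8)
The plan is to combine the integral representation condition of Theorem~\ref{hermitowskie2} with its reformulation in Proposition~\ref{IRC}, and then to reduce the rank estimate to the existence of an $n$-dimensional subspace of $H$ on which $\rho$ is positive definite. Recall that $\rho$, being integrally representable, is separable by Theorem~\ref{hermitowskie}(a) and hence positive semi-definite, so the rank of $\rho$ (the rank of the matrix $(\rho_{ijkl})$) equals $\dim H-\dim N$, where $N:=\{u\in H : \rho(u,u')=0 \text{ for all } u'\in H\}$ is the null space of $\rho$. Since $\dim H=nm$, it suffices to produce a subspace $W\subset H$ with $\dim W=n$ and $W\cap N=\{0\}$, for then $\dim N\le nm-n$ and $\rank\rho\ge n$.

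First I would use that $\rho$ is nonzero and integrally representable on $\C^n$: by Theorem~\ref{hermitowskie2}(b) it satisfies (IRC), and by the equivalence (\ref{pierwszy}~$\Leftrightarrow$~\ref{trzeci}) in Proposition~\ref{IRC} this is the same as condition~\ref{trzeci}, namely that $\q{\rho}_{\cphi}$ is strictly positive definite on $L$ for every nonzero class $\cphi\in K/\Ker$. As $\rho\ne0$ we have $\Ker\ne K$, so I can fix some $\tphi\in K\setminus\Ker$; its class $\cphi$ is nonzero, and hence the quadratic form $\tpsi\mapsto\q{\rho}_{\cphi}(\tpsi)=\q{\rho}(\tphi\otimes\tpsi)$ is strictly positive definite on the $n$-dimensional space $L$.

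Finally I would take $W:=\tphi\otimes L=\{\tphi\otimes\tpsi : \tpsi\in L\}$. Since $\tphi\ne0$, the map $\tpsi\mapsto\tphi\otimes\tpsi$ is injective, so $\dim W=\dim L=n$. The quadratic form of the restriction of $\rho$ to $W$ is precisely $\q{\rho}_{\cphi}$, which is strictly positive definite, so any $0\ne u\in W$ satisfies $\rho(u,u)>0$ and therefore $u\notin N$. Hence $W\cap N=\{0\}$, and the rank bound follows as explained in the first paragraph.

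I do not expect a genuine obstacle in this argument; the single point deserving care is the passage from positive definiteness on the subspace $W$ to the global rank bound. Rather than invoking eigenvalue interlacing, I would settle it by the dimension count $W\cap N=\{0\}$ above, which is elementary and does not require choosing a scalar product on $H$.
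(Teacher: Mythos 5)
Your proof is correct and follows essentially the same route as the paper's: both deduce (IRC) from Theorem~\ref{hermitowskie2}(b), pass to condition~\ref{trzeci} of Proposition~\ref{IRC}, choose a nonzero class $\cphi\in K/\Ker$ (possible since $\rho\neq0$), and conclude from strict positive definiteness of $\q{\rho}_{\cphi}$ on the $n$-dimensional subspace $\tphi\otimes L$ that $\rank\rho\geq n$. Your only addition is spelling out the final rank bound via the null-space dimension count $W\cap N=\{0\}$, a step the paper leaves implicit.
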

\begin{proof}
As we have mentioned above, since $\rho\ne0$, codimension of
$\Ker$ in $K$ is nonzero. Thus $K/\Ker$ is nonempty. Therefore,
using Theorem \ref{hermitowskie2}(b) and condition \ref{trzeci} in
Proposition \ref{IRC}, we deduce that if $\rho$ is integrally
representable then there exists $\cphi \in K/\Ker$ such that the
quadratic form $\q{\rho}_{\cphi}$ on $L$ is strictly positive
definite i.e. has rank $n$. Thus  there exists $n$-dimensional
subspace in $K\otimes L$ such that the quadratic form $\q{\rho}$
is strictly positive when restricted to this subspace. Thus its
rank is at least $n$.
\end{proof}

One could expect that the rank of the integrally representable
hermitian 2-form $\rho$ always equals $\codim(\Ker)\cdot\dim(L)$.
An example presented below contradicts this assertion. The example
is a special case of the following proposition, which gives an
interesting class of integrally representable forms.

\begin{proposition}
Let $f:\C^n\to\C$ be a twice differentiable function (in the real
sense), such that its first and second conjugate derivatives are
square integrable. Denote by
$$
\Phi = \conj{D}f=(\conj{\partial}_1f,\dots,\conj{\partial}_nf) :
\C^n\to \C^n
$$
the conjugate gradient of $f$. Then the hermitian 2-form
$\rho_\Phi$ given by (\ref{przedstawienie}) is separable on
$\C^n\otimes(\C^n)^*$ and its kernel contains all antisymmetric
tensors.
\end{proposition}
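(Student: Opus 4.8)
The plan is to read off separability directly from Theorem \ref{hermitowskie}(a) and to deduce the kernel statement from the symmetry of the conjugate Hessian of $f$. First I would unwind the hypotheses: writing $\Phi_i=\cpar_i f$, the conjugate differential of $\Phi$ has entries
\begin{eqnarray*}
(D_{\conj{z}}\Phi(z))_{ij}=\cpar_j\Phi_i=\cpar_j\cpar_i f .
\end{eqnarray*}
The assumption that the first conjugate derivatives of $f$ are square integrable is exactly the statement that $\Phi$ is square integrable, and the assumption on the second conjugate derivatives is exactly that $D_{\conj{z}}\Phi$ is square integrable. Hence the hypotheses of Theorem \ref{hermitowskie}(a) hold, so $\rho_\Phi$ is separable; this disposes of the first assertion with no further work.

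For the kernel I would first record that $\rho_\Phi$ is an integral of rank-one forms. For fixed $z$ the integrand $\conj{(D_{\conj{z}}\Phi(z))}\od D_{\conj{z}}\Phi(z)$ is the product form $\conj{\sigma_z}\od\sigma_z$ attached to the functional $\sigma_z(T)=\langle D_{\conj{z}}\Phi(z),T\rangle=\tr(D_{\conj{z}}\Phi(z)\,T)$ on $H=\C^n\otimes(\C^n)^*$, so that
\begin{eqnarray*}
\rho_\Phi(T,T)=\frac{1}{(2i)^n}\int_{\C^n}|\sigma_z(T)|^2\ d\conj{z}\wedge dz\ge 0 .
\end{eqnarray*}
Since separable forms are positive semi-definite, the kernel of $\rho_\Phi$ equals $\{T:\rho_\Phi(T,T)=0\}$, and by the displayed identity $T$ lies in the kernel precisely when $\sigma_z(T)=0$ for almost every $z$.

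The key step is the observation that, in coordinates,
\begin{eqnarray*}
\sigma_z(T)=\sum_{ij}(D_{\conj{z}}\Phi(z))_{ij}\,T_{ji}=\sum_{ij}\cpar_i\cpar_j f(z)\,T_{ji},
\end{eqnarray*}
where the conjugate Hessian $(\cpar_i\cpar_j f)$ is symmetric in $i,j$. Contracting this symmetric matrix against an antisymmetric tensor $T$ (that is, $T_{ij}=-T_{ji}$) yields $\sigma_z(T)=0$ for every $z$, hence $T\in\ker\rho_\Phi$. This shows that the kernel contains all antisymmetric tensors.

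The argument is mostly bookkeeping once Theorem \ref{hermitowskie}(a) is in hand, and the one point I expect to need genuine care is the commutativity $\cpar_i\cpar_j f=\cpar_j\cpar_i f$ that makes the conjugate Hessian symmetric. Under the stated regularity this is the symmetry of the (real) Hessian of a twice differentiable function; if the second conjugate derivatives are read in the weak sense it is the commutativity of distributional derivatives. Either way the symmetry holds almost everywhere, which is all the kernel characterization needs, so this is the main --- and only mild --- obstacle. The remaining care is purely notational: fixing the index convention so that an antisymmetric tensor is exactly the condition $T_{ij}=-T_{ji}$ used in the contraction above.
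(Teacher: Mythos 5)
Your proof is correct and takes essentially the same route as the paper's: separability is read off from Theorem \ref{hermitowskie}(a), and the kernel statement follows because the symmetry of the conjugate Hessian $\cpar_i\cpar_j f$ makes the contraction against an antisymmetric tensor vanish pointwise, so the quadratic form $\frac{1}{(2i)^n}\int_{\C^n}\big|\sum_{ij}c_{ij}\cpar_i\cpar_j f(z)\big|^2\,d\conj{z}\wedge dz$ is zero. Your write-up merely makes explicit a few points the paper leaves implicit (verifying the integrability hypotheses, the Cauchy--Schwarz reduction of the kernel to the null set of the quadratic form, and the Schwarz symmetry of the mixed conjugate partials), which is fine.
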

\begin{proof}
From Theorem \ref{hermitowskie} the hermitian 2-form $\rho_\Phi$
is separable. So we need to check that $\q{\rho}_\Phi(c)=0$ for
every antisymmetric tensor $c$. But this is obvious since the
second conjugate derivative is a symmetric operator and therefore
\begin{eqnarray*}
\q{\rho}_\Phi(c) & = & \sum_{ijkl}\frac{1}{(2i)^n}\int_{\C^n} \conj{(c_{ij}\cpar_i\cpar_j f(z))}c_{kl}\cpar_k\cpar_l f(z) \ d\conj{z}\wedge dz \\
& = & \frac{1}{(2i)^n}\int_{\C^n}
\Big|\sum_{ij}c_{ij}\cpar_i\cpar_j f(z)\Big|^2 \ d\conj{z}\wedge
dz = 0, \quad \text{if} \ c_{ij}=-c_{ji}.
\end{eqnarray*}
\end{proof}

\begin{example}
Consider a function $f:\C^n\to\C$ given by
\begin{eqnarray*}
f(z) = f_\psi(z) = \Big(\frac{2}{\sqrt{\pi}\alpha}\Big)^n e^{-\frac{1}{2\alpha^2}(4|z|^2
-4\alpha^2\langle \psi,\conj{z}\rangle+\alpha^4|w|^2)},
\end{eqnarray*}
where $\alpha > 0$ and $\psi\in(\C^n)^*$. Denote by $\Phi(z) =
\conj{D}f(z)$ the conjugate differential of this function and
consider the mapping $\Phi:\C^n\to\C^n$. Thus $\Phi$ determines a
separable hermitian 2-form $\rho_\psi$ on $\C^n\ot(\C^n)^*$ with
coefficients
\begin{eqnarray*}
\lefteqn{(\rho_\psi)_{ijkl}  =  \frac{1}{(2i)^n}\int_{\C^n} \conj{(\cpar_i\cpar_j f(z))}\cpar_k\cpar_l f(z) \ d\conj{z}\wedge dz } \\
&& =  \frac{16}{(2i)^n}\int_{\C^n} \conj{(\psi_i - \frac{z_i}{\alpha^2})}\conj{(\psi_j - \frac{z_j}{\alpha^2})}(\psi_k - \frac{z_k}{\alpha^2})(\psi_l - \frac{z_l}{\alpha^2})|f(z)|^2 \ d\conj{z}\wedge dz.
\end{eqnarray*}
Clearly, if $\psi_s=\psi_s^r + i\psi_s^i$, for $s=1,\ldots,n$, then
\begin{eqnarray*}
|f(z)|^2 =  \Big(\frac{2}{\sqrt{\pi}\alpha}\Big)^{2n} \prod_{s=1}^n e^{-\frac{4}{\alpha^2}(x_s-\frac{\alpha^2\psi_s^r}{2})^2}
e^{ -\frac{4}{\alpha^2}(y_s-\frac{\alpha^2\psi_s^i}{2})^2}.
\end{eqnarray*}
Hence, using the Fubini theorem, one can integrate real and imaginary parts separately. Using standard expressions for first four gaussian moments, one can check that
\begin{eqnarray*}
(\rho_\psi)_{ijkl} & = &
\conj{\psi}_i\conj{\psi}_j\psi_k\psi_l \\
&& + \frac{1}{\alpha^2}(\conj{\psi}_i\psi_l\delta_{jk} + \conj{\psi}_j\psi_k\delta_{il} + \conj{\psi}_i\psi_k\delta_{jl} + \conj{\psi}_j\psi_l\delta_{ik}) \\
&& + \frac{1}{\alpha^4}(\delta_{ik}\delta_{jl} + \delta_{jk}\delta_{il}).
\end{eqnarray*}
Now if we evaluate $\q{\rho}_\psi$ at a general element $c\in\C^n\otimes(\C^n)^*$,
we obtain
\begin{eqnarray*}
\q{\rho}_\psi(c) & = & \sum_{ijkl}(\rho_\psi)_{ijkl}\conj{c}_{ij}c_{kl} \\
& = & \Big|\sum_{ij}\psi_i c_{ij}\psi_j\Big|^2 + \frac{1}{\alpha^2}\sum_j\Big|\sum_i\psi_i(c_{ij} + c_{ji})\Big|^2 + \frac{1}{2\alpha^4}\sum_{i,j}|c_{ij} + c_{ji}|^2.
\end{eqnarray*}
Therefore we see that $\q{\rho}_\psi(c) = 0$ for every antisymmetric tensor $c$.
On the other hand if $c$ is not antisymmetric then the third sum of the above expression
gives positive contribution to it. Thus rank of $\rho_\psi$ equals
$\frac{n(n+1)}{2}$ that is the codimension of the space of antisymmetric tensors.
\end{example}

\section{Concluding remarks}

We presented integral formulas for separable mixed states of
bi-partite finite dimensional systems. The states which can be
integrally represented are automatically separable. Almost all
separable states (in particular, all lying in the interior of the
set of such states) can be represented in the integral form.

There are natural questions related to our results.

\begin{itemize}
 \item[Q1] The map $\Phi$ in the integral formula for a given state
is not uniquely determined by the state. It would be advantageous
to isolate a subclass of maps $\Phi$ in which the representation
is unique. Does it exist such a subclass?
 \item[Q2] Can the results be generalized to $H=K\otimes L$,
with infinite dimensional $K$ or $L$?
 \item[Q3] Can they be generalized
to multi-partite systems?
\end{itemize}

We do not know the answer to question Q1. Answering question Q2 we
see that the space $L=\C^n$ can not be replaced by an infinite
dimensional Hilbert $\tilde L$ space because there is no natural
measure on $\tilde L$ to be used in the integral representation.
On the other hand, it is possible to replace $K=(\C^m)^*$ with an
infinite dimensional Hilbert space $\tilde K$. In this case a map
$\tilde \Phi:\C^n\to \tilde K^*$ should play the role of the
previous map $\Phi:\C^n\to\C^m$ and statements (a) in Theorems
\ref{hermitowskie} and \ref{hermitowskie-t} remain true, with
almost the same proofs. However, statements (b) can not be true as
Proposition \ref{separ} does not hold in the case of infinite
dimension. Namely, in this case the cone of separable hermitian
forms is closed, convex and nowhere dense in the space of all
hermitian operators (see \cite{clifton}). Statements weaker then
(b) follow from our results, by taking maps $\tilde \Phi$ with
images in finite dimensional subspaces of $\tilde K$.

Question Q3 seems to have a negative answer if we try to
generalize our approach literally. However, there is a different
way of representing separable states in an integral form, which
works for multipartite systems, too. This approach is a subject of
a forthcoming paper by the same authors.

\end{document}